\documentclass[12pt]{article}
\usepackage{graphicx}
\usepackage{subcaption}
\usepackage{hyperref}
\hypersetup{
    colorlinks=true,
    linkcolor=blue,
    filecolor=magenta,      
    urlcolor=cyan,
}
\usepackage{float}
\usepackage{listings}
\usepackage{color}
\usepackage{tikz}
\usepackage{pgfplots}
\usepackage{pgfplotstable}
\usepackage{filecontents}
\usepackage{caption}
\usepackage{subcaption}
\usepackage{geometry}
\usepackage{booktabs}
\usepackage{multirow}
\usepackage{longtable}
\usepackage{tabularx}
\usepackage{array}
\usepackage{pdflscape}
\usepackage{rotating}
\usepackage{lscape}
\usepackage{setspace}
\usepackage{caption}
\usepackage{subcaption}
\usepackage{hyperref}
\usepackage{amsmath}
\usepackage{tcolorbox}
\usepackage{algorithm}
\usepackage{algorithmic}
\usepackage{bbm} 
\usepackage[backend=biber,style=authoryear,safeinputenc]{biblatex}
\usepackage{amsmath,amsthm,amssymb}

\setlength{\parindent}{0pt}

\newtheorem{definition}{Definition}
\newtheorem{lemma}{Lemma}
\newtheorem{proposition}{Proposition}
\newtheorem{theorem}{Theorem}
\newtheorem{remark}{Remark}
\newtheorem{assumption}{Assumption}

\geometry{
    a4paper,
    total={170mm,257mm},
    left=20mm,
    top=20mm,
}



\addbibresource{draft.bib}

\begin{document}

\title{Dynamic Decision-Making under Model Misspecification}
\author{Xinyu Dai}
\maketitle

\begin{abstract}
    In this study, I investigate the dynamic decision problem with a finite parameter space when the functional form of conditional expected rewards is misspecified. Traditional algorithms, such as Thompson Sampling, guarantee neither an $O(e^{-T})$ rate of posterior parameter concentration nor an $O(T^{-1})$ rate of average regret. However, under mild conditions, we can still achieve an exponential convergence rate of the parameter to a pseudo-truth set—an extension of the pseudo-truth parameter concept introduced by \textcite{white1982maximum}. I further characterize the necessary conditions for the convergence of the expected posterior within this pseudo-truth set. 
    Simulations demonstrate that while the maximum a posteriori (MAP) estimate of the parameters fails to converge under misspecification, the algorithm's average regret remains relatively robust compared to the correctly specified case. These findings suggest opportunities to design simple yet robust algorithms that achieve desirable outcomes even in the presence of model misspecifications.
\end{abstract}

\textbf{Keywords:} Thompson Sampling, Model Misspecification, Dynamic Decision-Making, Pseudo-Truth Parameter

\newpage

\section{Introduction}

Dynamic decision-making problems arise in many economic contexts, such as dynamic pricing, online advertising, investment, and treatment allocation. In these contexts, the decision maker needs to learn the true model from the data and make decisions based on the learned model. This generates a well-known exploration-exploitation trade-off (\cite{lattimore2020bandit}). Traditional methods such as Thompson Sampling (TS) are used for solving these dynamic problems are proved to be asymptotically optimal. However, these results contains implicit assumption that the underlying model is correctly specified and there is limited understanding how those methods behave for a possibly misspecified class of models.

On the other hand, econometricians have investigated model misspecification for a long time. It is well-known that the statistical inference under misspecification will lead to a pseudo-true parameter, which minimizes the Kullback-Leibler divergence between the true model and the pseudo-true model (\cite{white1982maximum}). Recent research in bayesian decision theory characterize conditions for priors which lead to a unique and informative pseudo-truth convergence (\cite{andrews2023structural}). However, it is still unclear how misspecification affects the behavior of parameter estimation in a dynamic environment.\footnote{There are also other recent research on the diagnosis and inference under, either local or global, model misspecification,  such as \textcite{mullerlocally}, \textcite{bonhomme2022minimizing} \textcite{armstrong2024adapting}, and \textcite{masten2021salvaging}. However, all of those research mainly focus on the static case.}

In this paper, we address these challenges by providing a preliminary analysis of Thompson Sampling under model misspecification. Our main contribution was to show that, even when the true data-generating mechanism \(\theta^{*}\) lies outside the assumed parameter space \(\Theta\), Thompson Sampling algorithm still produces posterior distributions that concentrate around a \emph{pseudo-truth set} \(\Theta^\dagger\) at an exponential rate. This result holds irrespective of which actions are chosen, provided the algorithm uses Bayesian updates to compare parameters' likelihood scores. Moreover, the total regret grows linearly in time \(T\), implying a constant per-period regret in the limit. Then we further derived necessary conditions under which pathwise posterior concentration on a subset of \(\Theta^\dagger\) can occur. Specifically, a subset \(S\subset \Theta^\dagger\) must be jointly closed under overshadowing and strongly connected to sustain nontrivial posterior mass in the long run.

We also performed simulations for further investigate the behavior of the parameter posterior within the psuedo-truth set. Our simulations are based on two distinct true data-generating processes (DGPs): a quadratic reward function with thresholds and a piecewise linear reward function with breakpoints. The simulation results shows that the MAP of the parameter is no longer convergnce to a single point and multiple limits can coexist. The average regret of the Thompson Sampling algorithm remains bounded by a constant, and it also shows robustness to the correctly specified case. These findings suggest opportunities to design simple yet robust algorithms that can be adaptive to model misspecifications. 

\paragraph{Literature Review} The literature on dynamic decision-making under model misspecification is still limited. Several current research on this topic is done by \textcite{adusumilli2021risk} and \textcite{fan2021diffusion}, who use the stochastic differential equation to approximate asymptotic behavior of Thompson Sampling algorithm under the bandit problem. Certain types of misspecification are considered in \autocite{fan2021diffusion,li2023dynamicselectionalgorithmicdecisionmaking}'s work such as the misspecification of the variance of the noise distribution or the endogeneity between the choices of arm and its contextual variables. However, this paper's contribution distinguishes from these research in three folds. First, we investigate a more flexible setting of misspecification of reward functions and this will plays a key role in our analysis of the algoirthms dynamics. Second, our setting is adapt to continuous bandit problem with a parametric setting rather than a fully nonparametric discrete bandit problem, which could be more attractive to applied economists, who may hold certain prior knowledge of the reward function implied from the economic theory. Third, as far as we know, our investigation on misspecified dynamics decision making from the pseudo truth parameter perspective is novel to the literature. \footnote{Moreover, misspecified bandit problem is also considered in the reinforcement learning literature. Some representative works are \textcite{pmlr-v119-lattimore20a,NEURIPS2021_177db6ac,NEURIPS2020_84c230a5}. However, most of those works restrict their attention to a fully nonparametric discrete bandit problem and focus on the regret rate analysis instead of the behavior of a misspecified model parameter.} 

Our paper can also be related to the economic theory literature of misspecified learning. Economists have long been interested in the impact of model misspecification on the behavior of agents (\cite{https://doi.org/10.3982/ECTA12609,fudenberg2023misspecifications,ba2023robustmisspecifiedmodelsparadigm}). Though we share some similarity in terms of the misspecification setting, those research mainly assume the agent will take an \textit{ad hoc} decision rule in a specific generic of economic context, such as dynamic game or macroeconomic model (\cite{ESPONDA2021105260,murooka2023higher}). The analysis are interested in the economic insights can be generated from that context accordingly. In contrast, our paper focus more on parameter inference from a sophisticated statistician perspective, with purpose on how the pseudo-truth parameter can be characterized and how the algorithm can be designed to be robust to the misspecification. In that sense, our work rich the current economic theory considering the player as statisticians by taking the misspecification in to account (\cite{liang2019games,salant2020statistical}). (may need to be more specific)


In the following sections, we will investigate the impact of model misspecification on the decision-making process through simulation results. We will compare the performance of Thompson Sampling under different levels of model misspecification, as well as different levels of noise in the rewards.

\section{Model Setup}

We consider a dynamic decision problem taking place over discrete time steps \(t = 0, 1, 2, \ldots\). At each time \(t\):

\begin{enumerate}
    \item The decision maker observes a covariate \(X_t \in \mathcal{X}\).
    \item Based on the current history \(H_t = \bigl(A_0, X_0, R_0, \ldots, A_{t-1}, X_{t-1}, R_{t-1}\bigr)\), the decision maker chooses an action \(A_t \in \mathcal{A}\).
    \item A reward \(R_t\) is then realized. Its conditional distribution is assumed (by the model) to be given by a density
    \begin{equation}\label{eq:reward_model}
        R_t \;\sim\; f_{\theta}\bigl(\,\cdot \mid A_t, X_t\bigr)
    \end{equation}
    where \(\theta\in\Theta\subseteq\mathbb{R}^d\) is a parameter.
\end{enumerate}

\noindent
Here, \(\Theta\) is a finite set of candidate parameters. However, the \emph{true} parameter governing the rewards, denoted \(\theta^*\), may lie outside \(\Theta\). Formally:

\begin{assumption}[Misspecification]\label{assump:misspec}
The true parameter \(\theta^*\) that generates the data does not necessarily lie in \(\Theta\). Hence, \(\theta^* \notin \Theta\).
\footnote{The dimension of \(\theta^*\) needs not match that of \(\Theta\). One may also think that the reward function is generate by another class of function $g_{\theta^*}(.|A_t, X_t)$ which is not in the class of $f_{\theta}(.|A_t, X_t)$.}
\end{assumption}

\vspace{1em}
\noindent
\textbf{Expected Reward.} For each \(\theta \in \Theta\), define the expected reward under that parameter as 
\begin{equation}\label{eq:expected_reward}
    r_{\theta}(a, x) \;=\; \mathbb{E}_{\theta}[\,R_t \,\mid\, A_t = a,\, X_t = x\,]
\end{equation}

If \(\theta\) were known and correct, the optimal action at \((a,x)\) would be \(\arg\max_{a \in \mathcal{A}} r_{\theta}(a,x)\).

\vspace{1em}
\noindent
\textbf{Policies and Regret.} 
A \emph{decision rule} or \emph{policy} \(\mu\) is a mapping from histories \(\mathcal{H}_t\) to an action in \(\mathcal{A}\). Suppose an \emph{oracle} policy \(\mu_{\theta}\) that always uses the parameter \(\theta\) to select the best action. Its expected reward from time \(0\) to \(T-1\) is
$$
\mathbb{E}_\theta^{\mu_{\theta}}\Bigl[\sum_{t=0}^{T-1} r_{\theta}(A_t, X_t)\Bigr]
$$
A policy \(\mu\) that must learn from data could yield a (possibly lower) total expected reward,
$$
\mathbb{E}_\theta^{\mu}\Bigl[\sum_{t=0}^{T-1} r_{\theta}(A_t, X_t)\Bigr]
$$
We define the \emph{average expected regret} of \(\mu\) over \(T\) periods as
\begin{equation}\label{eq:average_regret} 
    AR(\mu) \;=\; \frac{1}{T} \left|\,
    \mathbb{E}_\theta^{\mu_\theta}\!\Bigl[\sum_{t=0}^{T-1} r_{\theta}(A_t, X_t)\Bigr]
    \;-\;
    \mathbb{E}_\theta^{\mu}\!\Bigl[\sum_{t=0}^{T-1} r_{\theta}(A_t, X_t)\Bigr]
    \right|
\end{equation}

\vspace{1em}
\noindent
\textbf{Posterior Updating.}
Given a prior distribution \(\pi_0(\theta)\) over \(\Theta\), the posterior after observing history \(H_t\) is
\begin{equation}\label{eq:posterior}
    \pi_t(\theta) \;=\;
    \frac{\mathcal{L}_\theta\bigl(H_t\bigr)\,\pi_0(\theta)}
    {\sum_{\gamma \in \Theta} \mathcal{L}_\gamma\bigl(H_t\bigr)\,\pi_0(\gamma)},
\end{equation}
where \(\mathcal{L}_\theta(H_t)\) is the likelihood of observing \(H_t\) under \(f_{\theta}\). 

\vspace{1em}
\noindent
\textbf{Thompson Sampling (TS).} 
A well-known Bayesian learning algorithm for exploration–exploitation is Thompson Sampling, which proceeds as follows:

\begin{enumerate}
    \item Update the posterior \(\pi_t(\theta)\) from \eqref{eq:posterior}.
    \item Sample a parameter \(\theta_t \sim \pi_t(\theta)\).
    \item Select the action 
    $$
    A_t \;=\; \arg\max_{a \in \mathcal{A}} \; r_{\theta_t}(a, X_t) 
    \;=\; \arg\max_{a \in \mathcal{A}} \; \mathbb{E}_{\theta_t}[R_t \mid a, X_t]
    $$
    \item Observe the reward \(R_t\), then update the posterior to \(\pi_{t+1}(\theta)\) accordingly.
\end{enumerate}

\vspace{1em}
\noindent
\textbf{Kullback-Leibler (KL) Divergence.}
For two probability measures \(v_{\theta}\) and \(v_{\gamma}\), we define their KL divergence as

\begin{equation}\label{eq:KL_divergence}
    \mathcal{K}(v_{\theta} \,\mid\, v_{\gamma})
    \;=\;
    \mathbb{E}_{\theta} \Bigl[\,
    \log\!\bigl(\tfrac{dv_{\theta}(a)}{dv_{\gamma}(a)}\bigr)
    \Bigr]
    \;=\;
    \mathbb{E}_{\theta}\!\Bigl[\,
    \log \frac{f_{\theta}(\cdot \mid a, x)}{f_{\gamma}(\cdot \mid a, x)}
    \Bigr].
\end{equation}

This will be useful for analyzing how quickly the posterior discards parameters that fit the data poorly.

\section{Numerical Examples}

We firstly consider a simple example where the decision maker will assume their reward function will come from a class of quadratic function with a Gaussian noise. It can be parameterized as follows:
\begin{equation}
    R_t = f_{\theta}(A_t) =  \theta_1 + \theta_2 A_t + \theta_3 A_t^2 + \epsilon_t 
\end{equation}

where $\epsilon_t \sim N(0, \sigma^2)$ is the noise term, The parameter space is defined as $\theta = (\theta_1, \theta_2, \theta_3) \in \Theta$. \footnote{Since we currently focus on the misspecification of expected reward function, we assume the variance of the noise term is known for simplicity.} This model can be used to represent a wide range of economic problems, such as dynamic pricing, investment, and treatment allocation. For example, in a dynamic pricing problem, such a quadratic reward function can be interpreted as the revenue generated by setting a price $A_t$ at time $t$, which is solved from a linear demand and supply system. The decision maker aims to learn the true parameter $\theta^*$ and make decisions that maximize the expected reward.

In such a setting, the oracle decision rule should be $\mu(t) = \mu* = -\frac{\theta_2}{2\theta_1}$ if we know the true parameter. However, when the parameter is unknown, a classical solution is using a heuristic algoirthms to balance the exploration and exploitation trade off. For example, Thompson Sampling algorithm is a Bayesian algorithm that samples the parameter from the posterior distribution and selects the action that maximizes the expected reward. The specific implementation of the Thompson Sampling algorithm is shown in Algorithm 1.

\begin{algorithm}
    \caption{Thompson Sampling (TS) Algorithm}
    \begin{algorithmic}[1]
    \STATE \textbf{Initialize:} Prior distribution $\pi(\theta)$
    \FOR{each time step $t = 1,2,\dots$}
        \STATE Sample $\theta_t \sim \pi(\theta)$
        \STATE Select action $A_t = \arg \max_{a} f_{\theta_t}(r \mid a, X_t)$
        \STATE Observe reward $R_t$
        \STATE Update posterior distribution $\pi(\theta \mid \{(A_s, R_s)\}_{s=1}^{t})$
    \ENDFOR
    \end{algorithmic}
\end{algorithm}

The posterior distribution of the parameter $\theta$ at time $t$ is given by \eqref{eq:posterior}. 

The existing research has shown the Thompson Sampling algorithm can achieve an $O(e^{-T})$ rate of convergence of the posterior parameter to the true parameter under the correctly specified model (\cite{kim2017thompson,banjevic2019thompson}) and the expectation of parameter posterior will converge to the true parameter with exponential rate. 

Now let us consider two examples of misspeciication on the above quadratic expected reward function's underlying assumptions. The first is the symmetry of the reward function (Example 1) and the second is unimodal of the reward function (Example 2). Those assumptions are generally not true when the demand function is nonlinear. In the following simulations, we will investigate how the Thompson Sampling algorithm behaves under these two types of model misspecification.

True DGP Example 1:
\begin{equation}
    R_t = f_{\theta, \delta}(A_t) = \begin{cases}
    \beta_1 (A_t - \alpha_1) + \alpha_2 + \epsilon_t, & \text{if } A_t < \alpha_1 \\
    \beta_2 (A_t - \alpha_1) + \alpha_2 + \epsilon_t, & \text{otherwise}
    \end{cases}
\end{equation}

Let $\{\epsilon_t\}_{t \geq 1}$ be a sequence of i.i.d. random variables such that $\epsilon_t \sim \mathcal{N}(0, \sigma^2)$. The parameter space is given by  
\begin{equation}
    \Theta' = \left\{ \theta = (\alpha_1, \alpha_2, \beta_1, \beta_2) \in \mathbb{R}^4 \mid \beta_1 \neq \beta_2 \right\}.
\end{equation}



True DGP Example 2:

\begin{equation}
    R_t = f_{\theta, \delta}(A_t) = \begin{cases}
    \theta_1 + \theta_2 A_t + \theta_3 A_t^2 - M + \epsilon_t, & \text{if } -\frac{\theta_3}{2\theta_2} - \delta \leq A_t \leq -\frac{\theta_3}{2\theta_2} + \delta \\
    \theta_1 + \theta_2 A_t + \theta_3 A_t^2 + \epsilon_t, & \text{otherwise}
\end{cases}
\end{equation}

where $\epsilon_t \sim N(0, \sigma^2)$ is the noise term. The parameter space is defined as:
\begin{equation}
    \Theta' = \left\{ \theta = (\theta_1, \theta_2, \theta_3, \delta) \in \mathbb{R}^4 \mid \delta > 0, \theta_1 < 0 \right\}
\end{equation}

\begin{figure}[H]
    \centering
    \includegraphics[width=8cm]{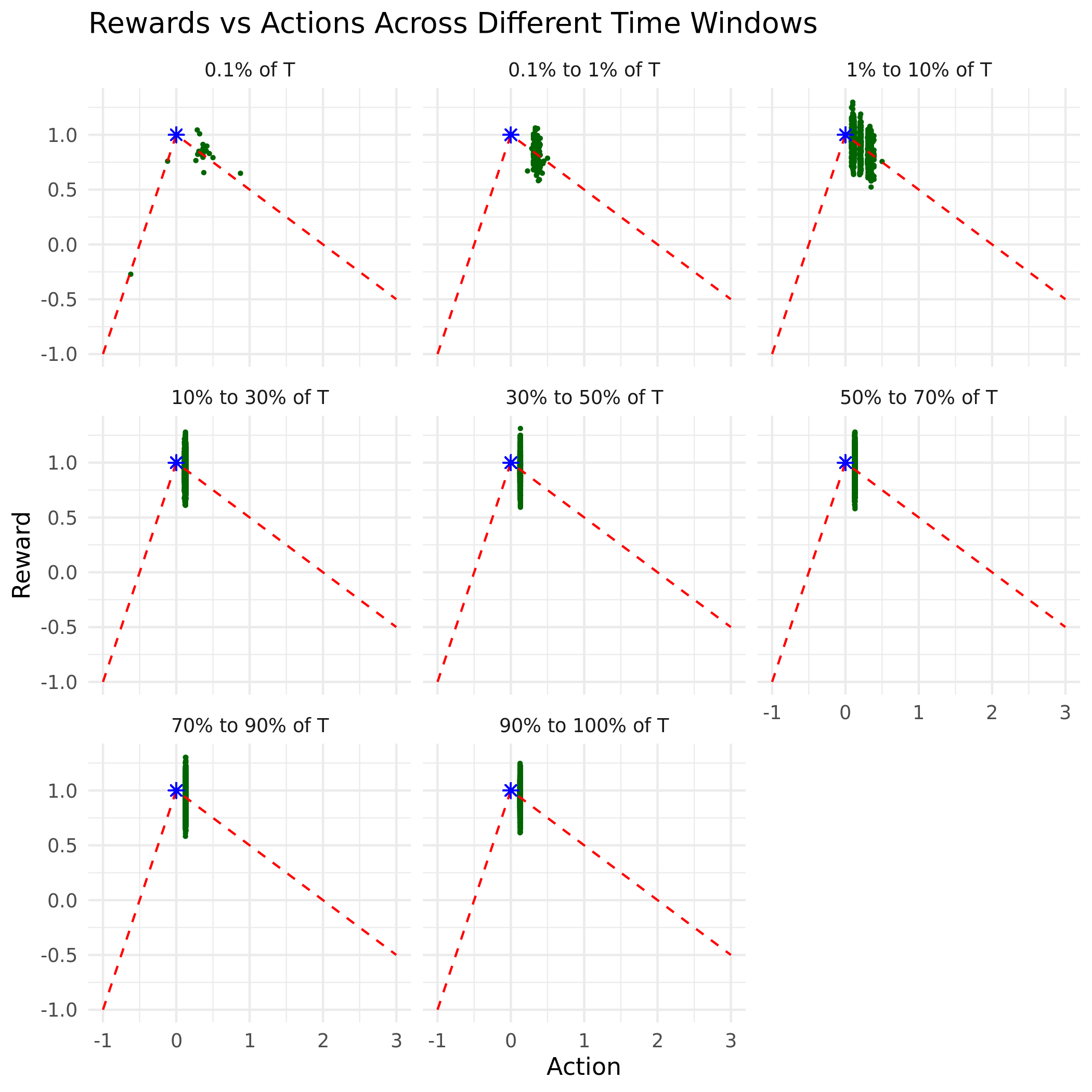}
    \includegraphics[width=8cm]{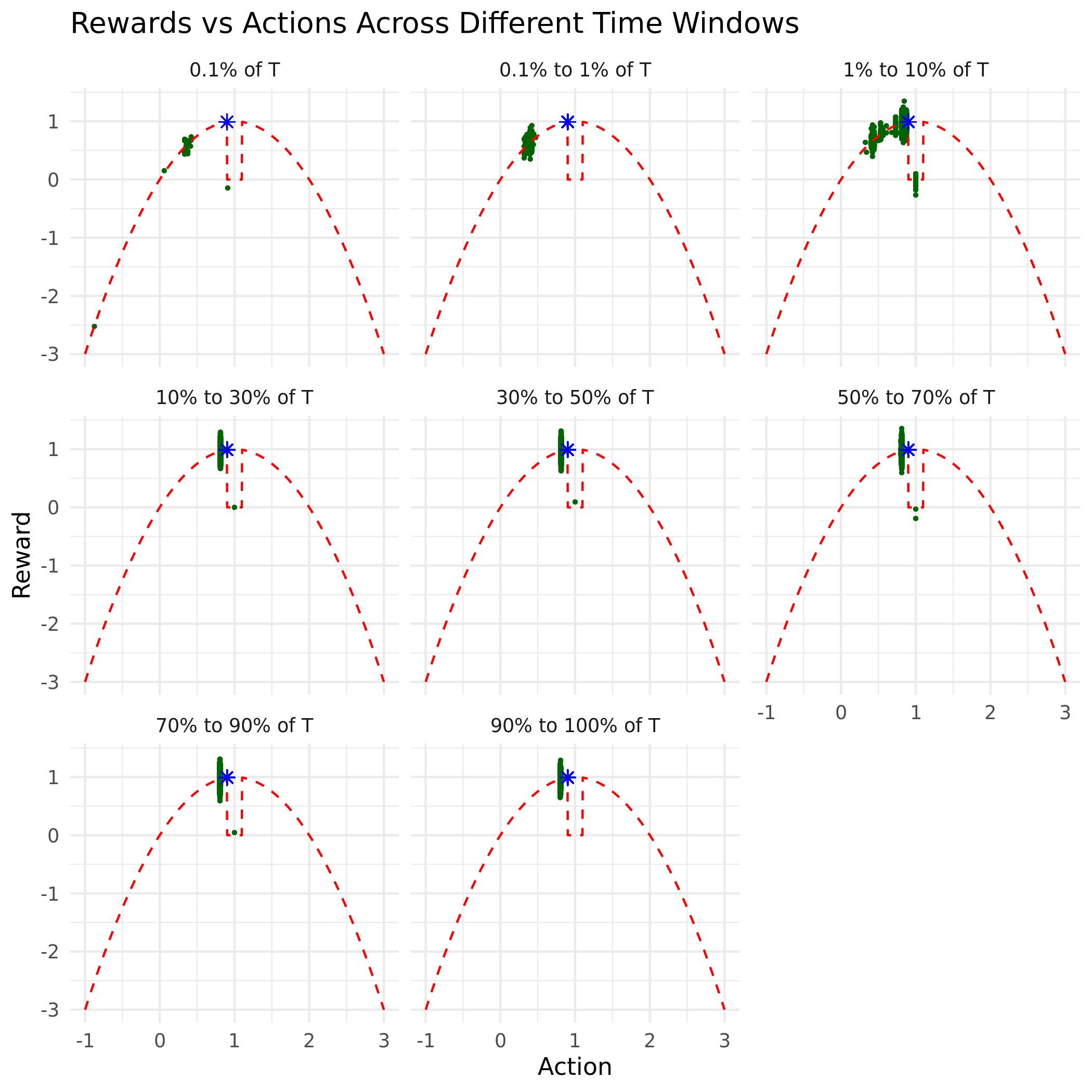}
    \caption{This is an illustration of the rewards and action relation over time under a Thompson Sampling algoirthms. The red dashed line represents the true expected reward function. The green dots represent the action taking and the rewards observed at each time period. The left panel shows the case for example 1 with the true parameter $(\alpha_1, \alpha_2, \beta_1, \beta_2) = (0,1,2,\frac{1}{2})$. The right panel shows the case for example 2 with the true parameter $(\theta_1, \theta_2, \theta_3, \delta) = (0, 2, -1, 0.1)$. The noise term $\epsilon \sim \mathcal{N}(0, 0.1)$ maximun time period $T$ is set to 20000.}
    \label{fig:Rewards_vs_Actions}
\end{figure}

Figure \ref{fig:Rewards_vs_Actions} shows under model misspecification, the Thompson Sampling algorithm will assign less probability to the true parameter and more probability to 'pseudo-true' parameters. Moreover, the simulation results show that the MAP of the parameter is no longer convergnce to a single point and multiple limits can coexist (Figure \ref{fig:MAP})\footnote{The results from misspecified model in Example 1 is similar to Example 2 but are not shown here, because the correct specified model uses different class of parameters, which are not directly comparable.}. In such cases, the pseudo-true parameter is not even well-defined. And since the bias of the parameter estimation persists, the decision rule will be suboptimal. This results in a constant average regret (or linear accumulative regret) rate rather than an exponential convergence rate for the Thompson Sampling algorithm (see in Figure \ref{fig:action_distribution} and Figure \ref{fig:average_regret}).

\begin{figure}[H]
    \centering
    \includegraphics[width=5cm]{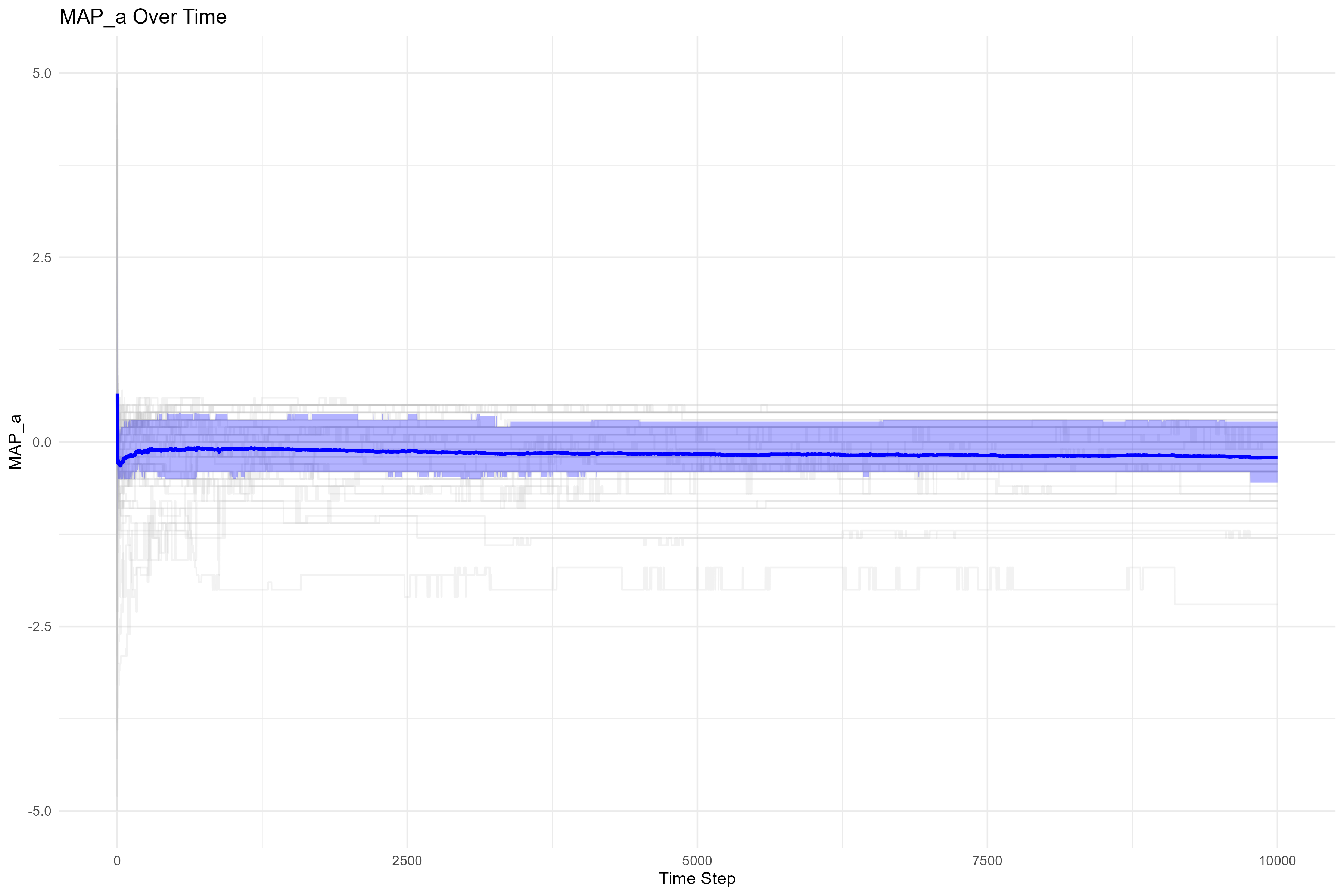}
    \includegraphics[width=5cm]{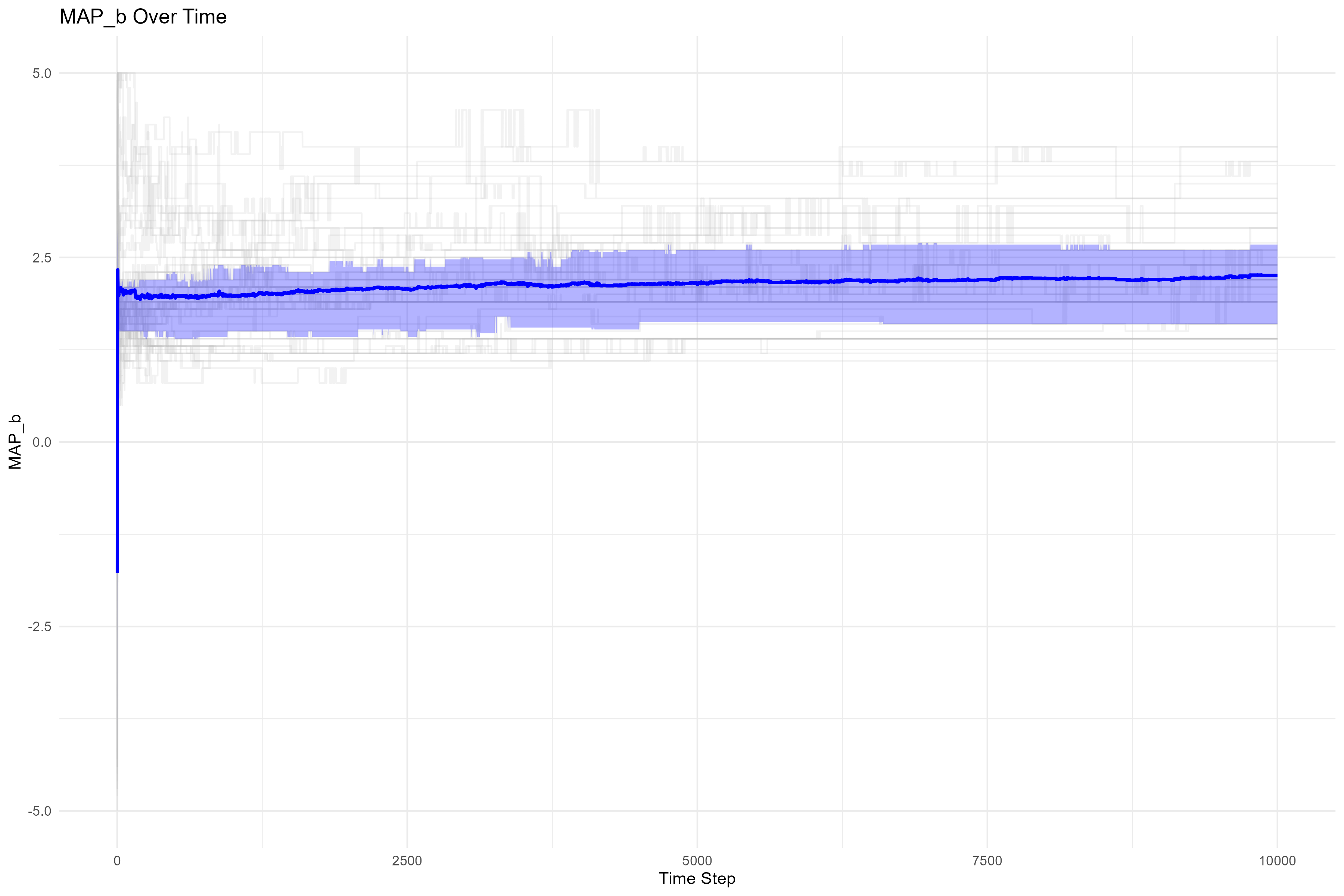}
    \includegraphics[width=5cm]{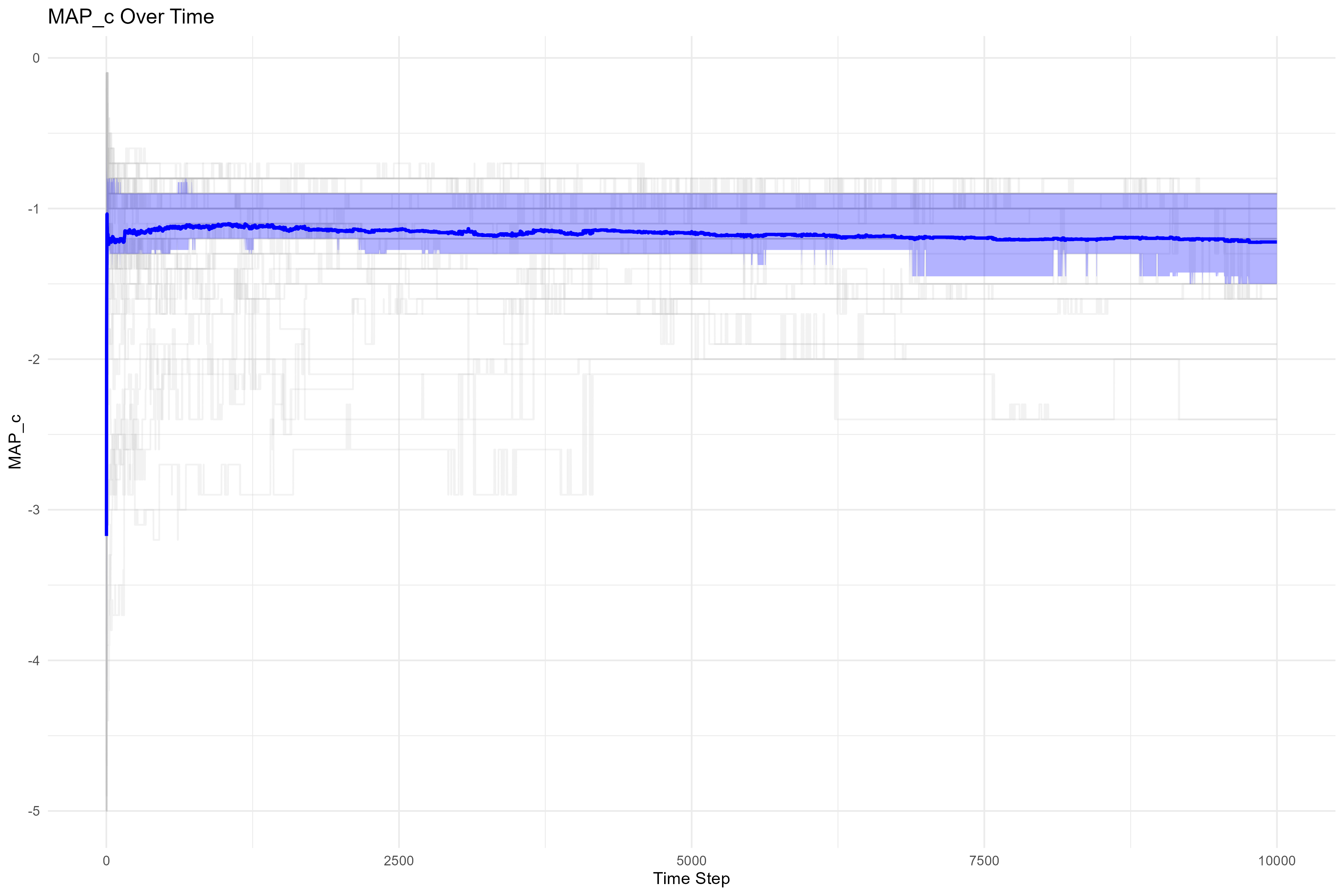}
    \\
    \includegraphics[width=5cm]{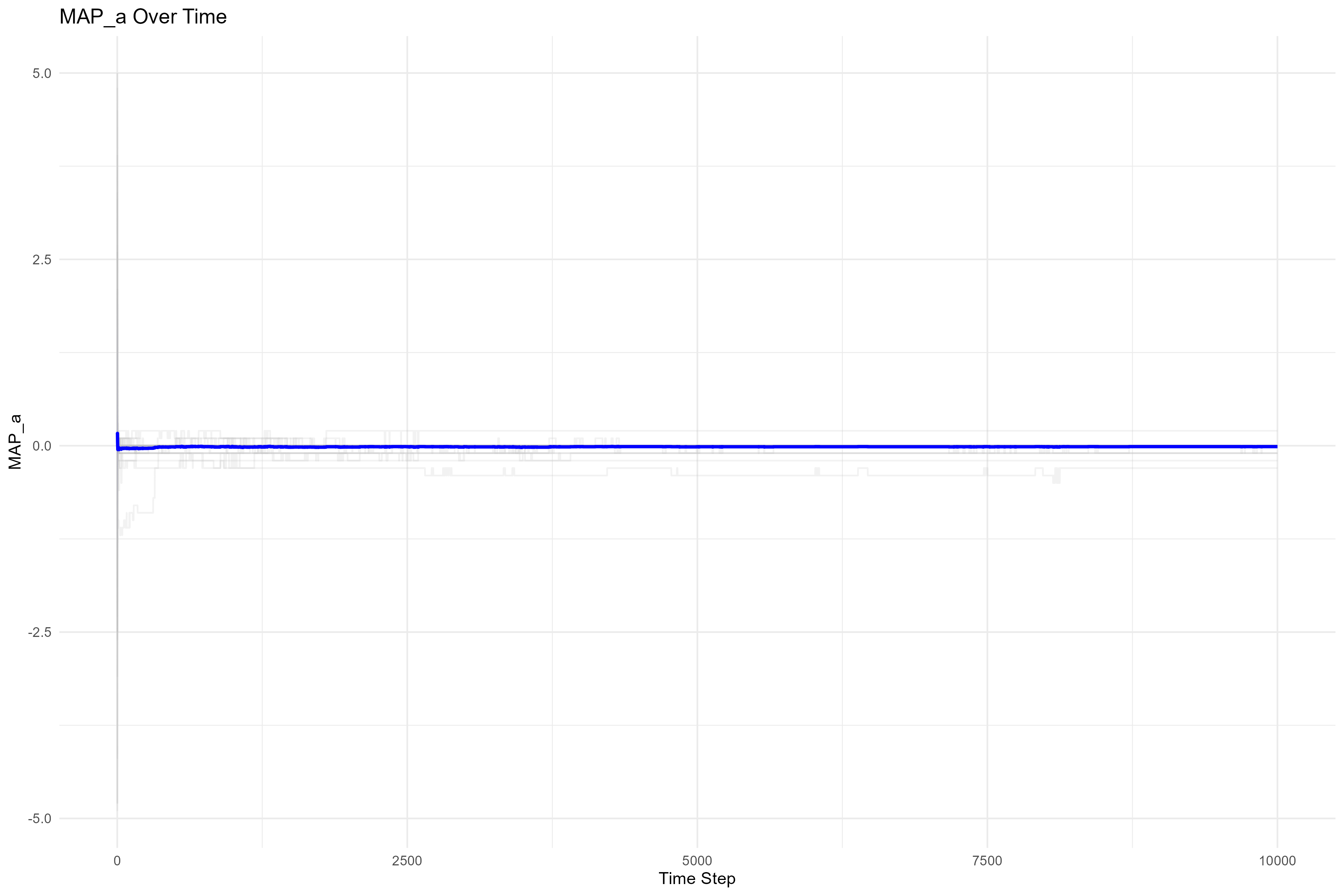}
    \includegraphics[width=5cm]{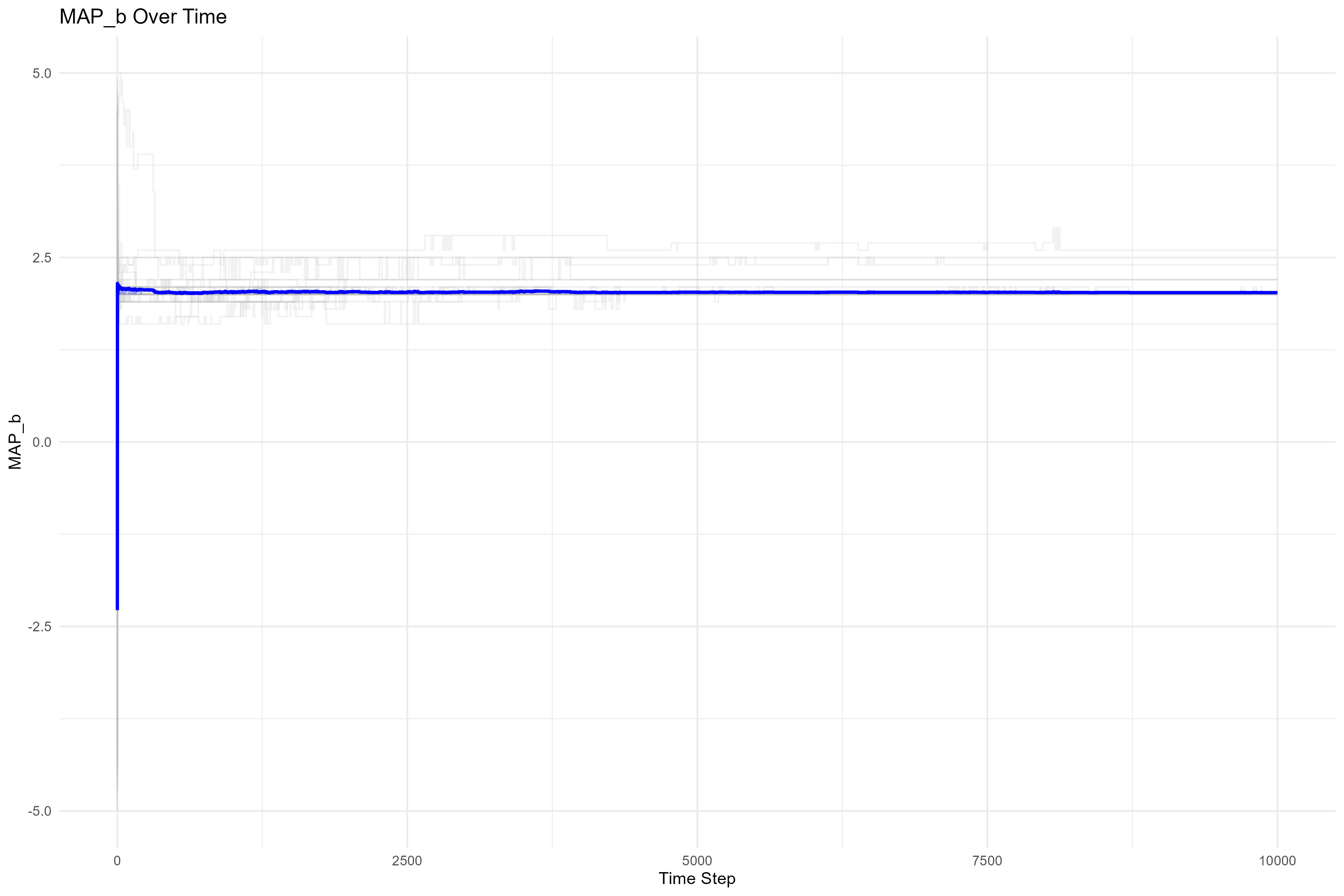}
    \includegraphics[width=5cm]{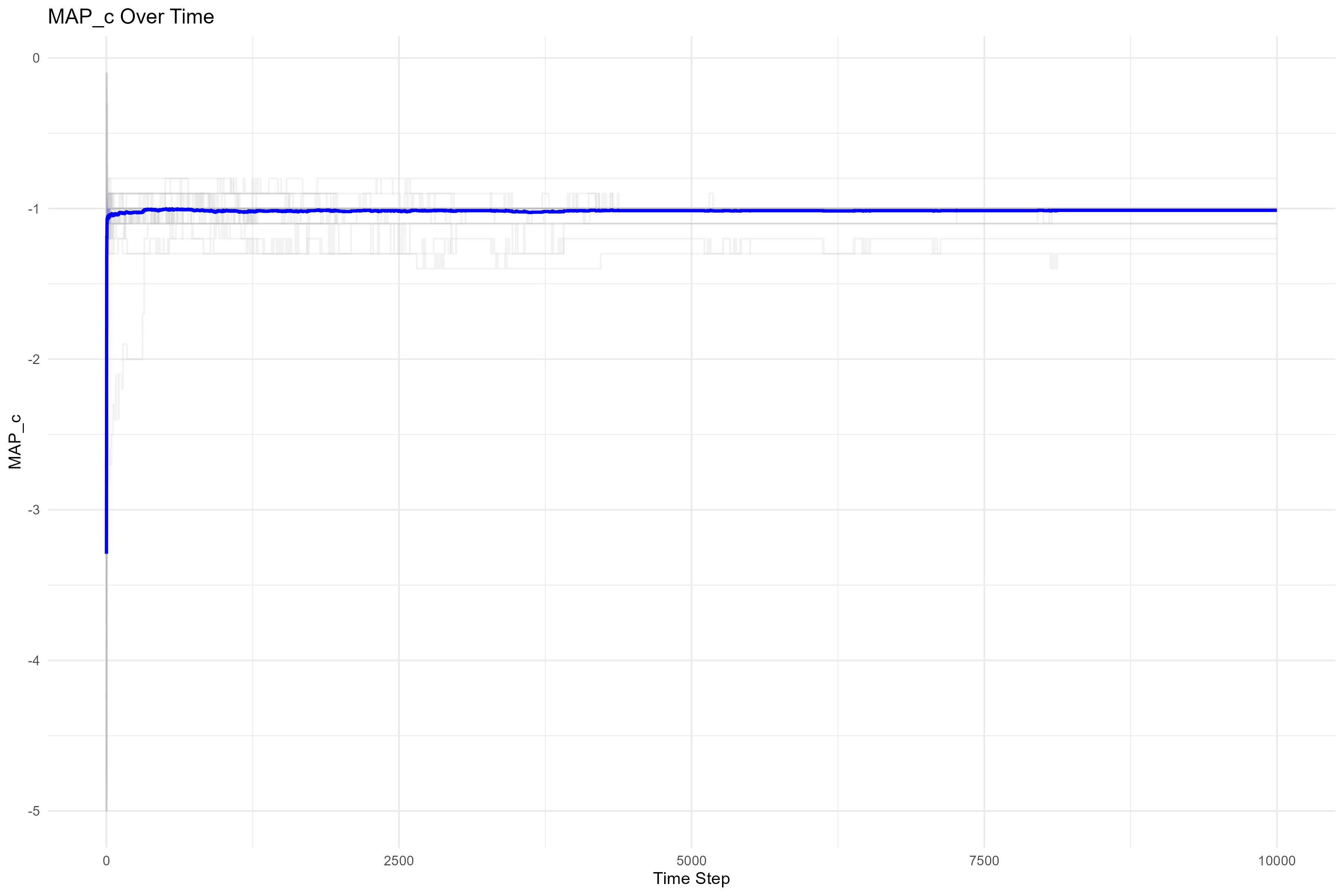}
    \caption{Maximum a Posteriori (MAP) estimates of the parameters over time. The upper row represents the misspecified model in Example 2, while the lower row corresponds to the correctly specified model. Each column shows the MAP estimates for one of the three parameters. The grey lines represent individual simulations, the blue line represents the average value over all simulations, and the shaded bands indicate the upper and lower 25\% quantiles over the simulations.}
    \label{fig:MAP}
\end{figure}

\begin{figure}[H]
    \centering
    \includegraphics[width=8cm]{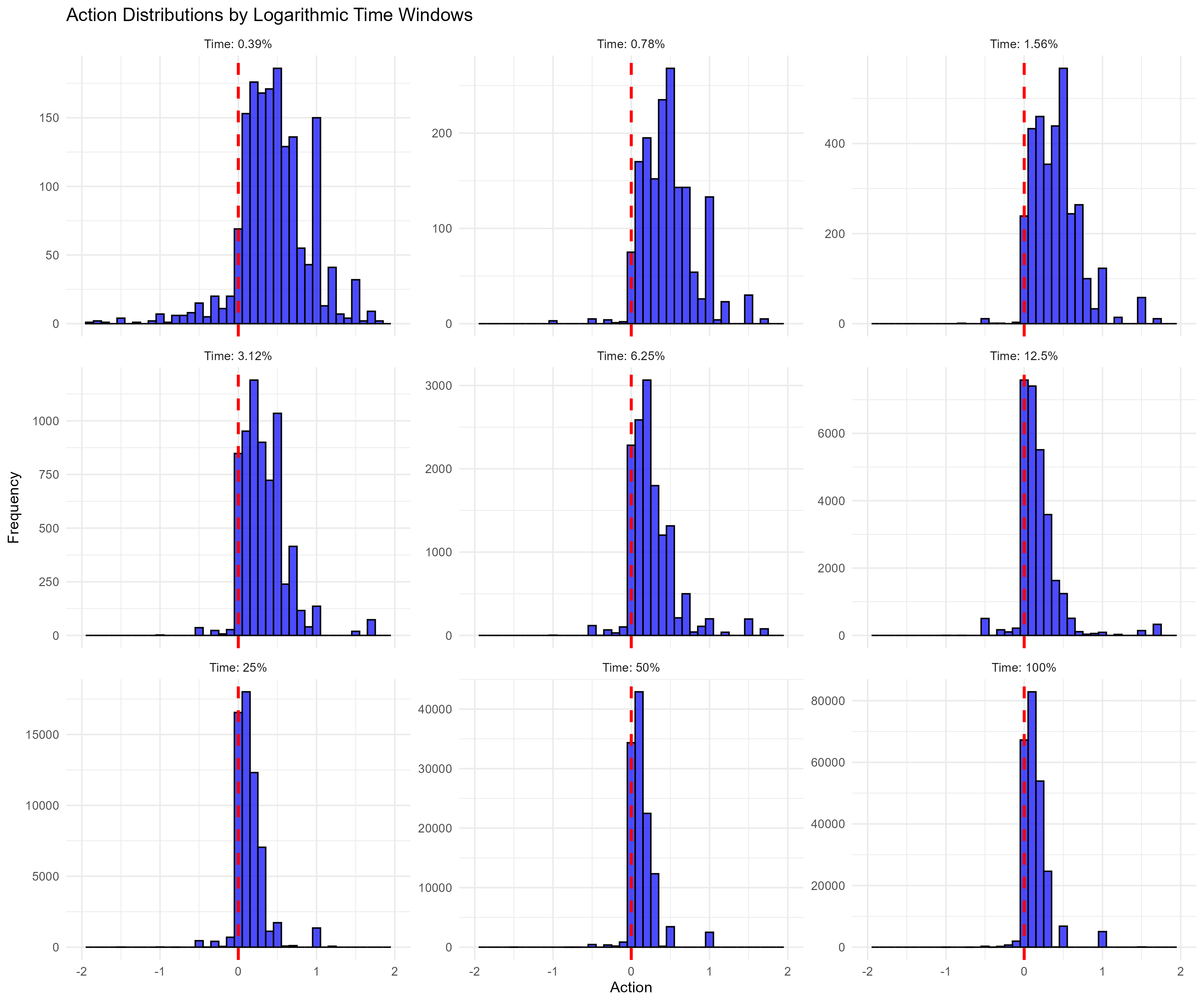}
    \includegraphics[width=8cm]{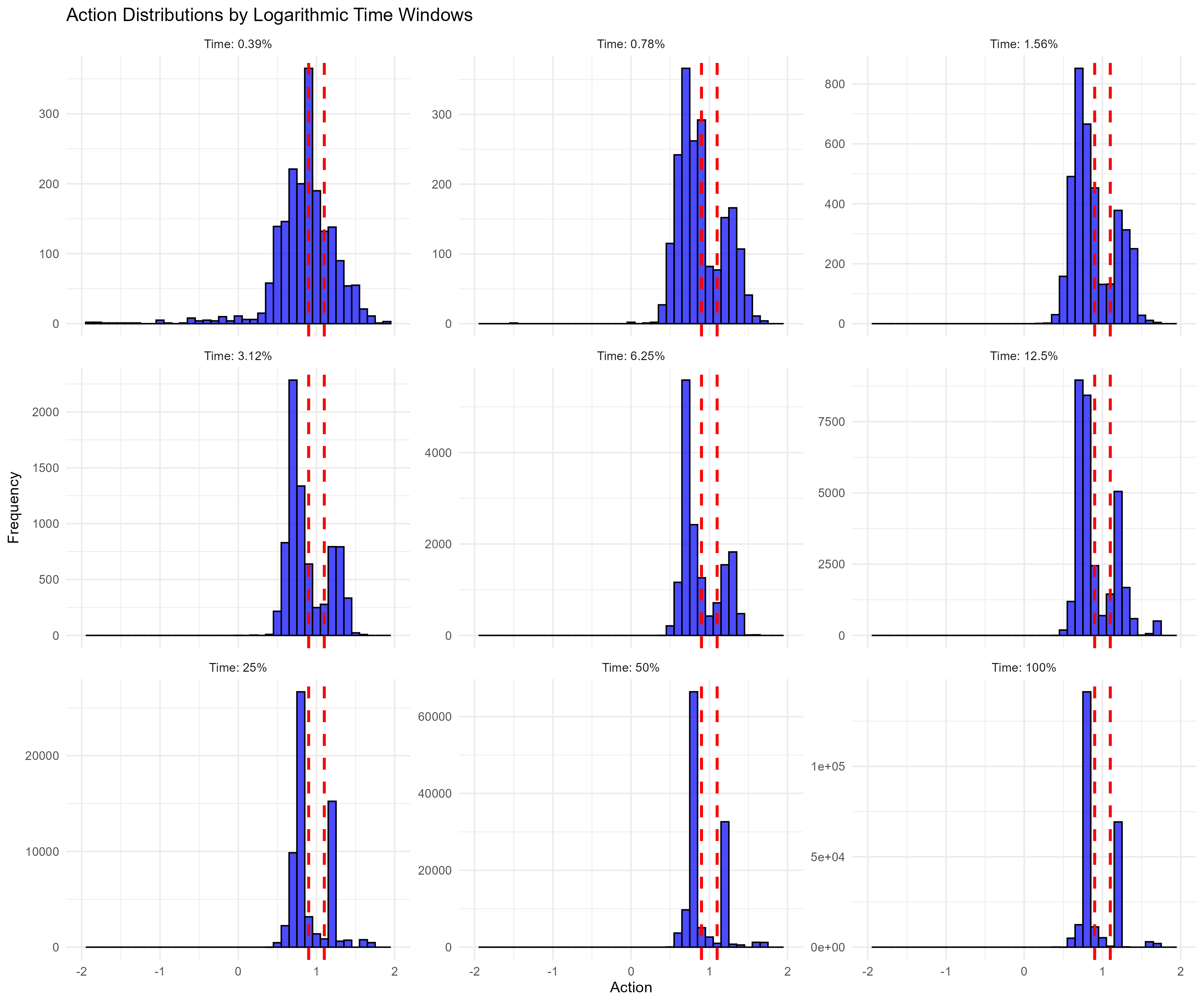}
    \caption{Action taking distribution histogram over different time windows. The red dashed line represents the optimal action, and the blue columns represent the frequency of the corresponding actions. The left panel corresponds to Example 1, and the right panel corresponds to Example 2.}
    \label{fig:action_distribution}
\end{figure}

\begin{figure}[H]
    \centering
    \includegraphics[width=8cm]{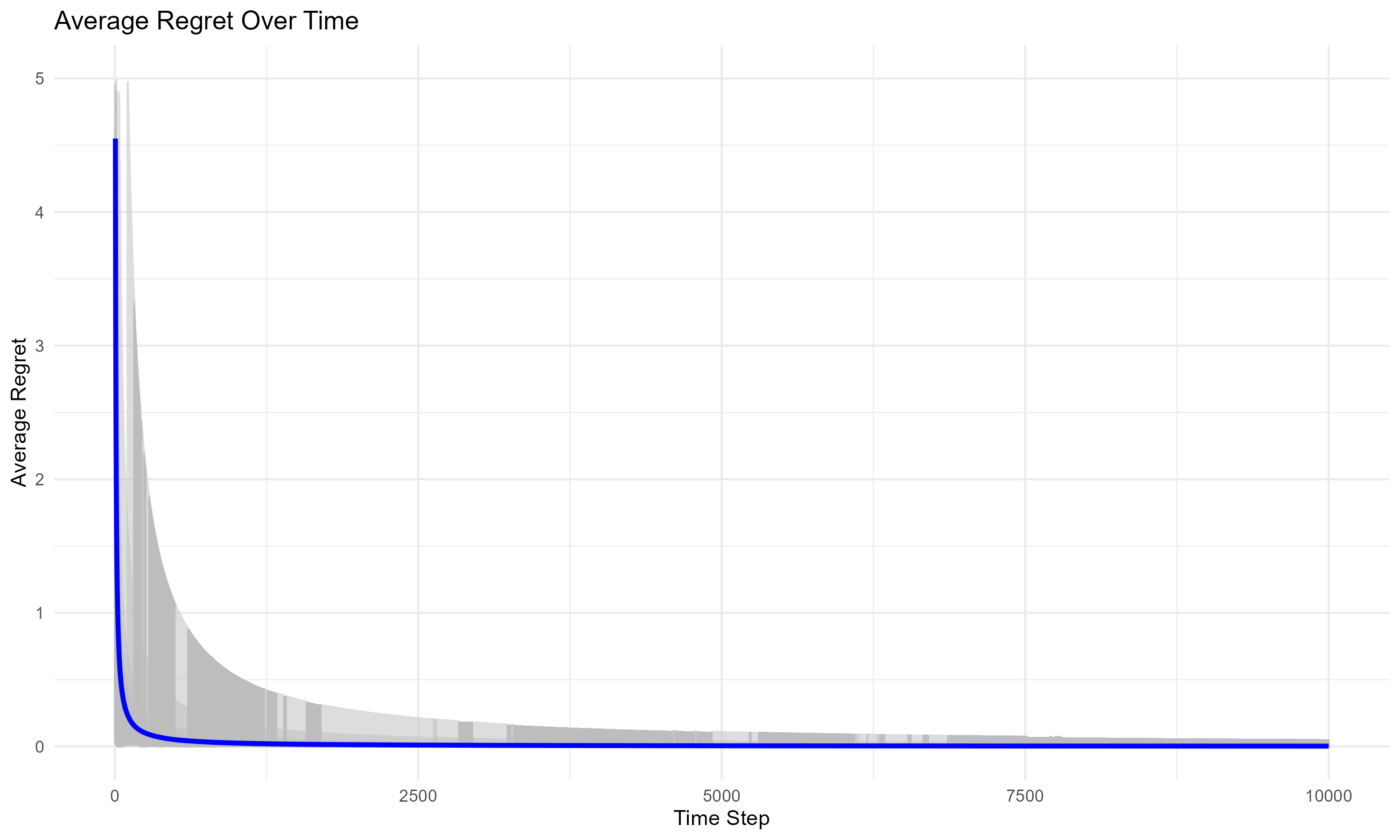}
    \includegraphics[width=8cm]{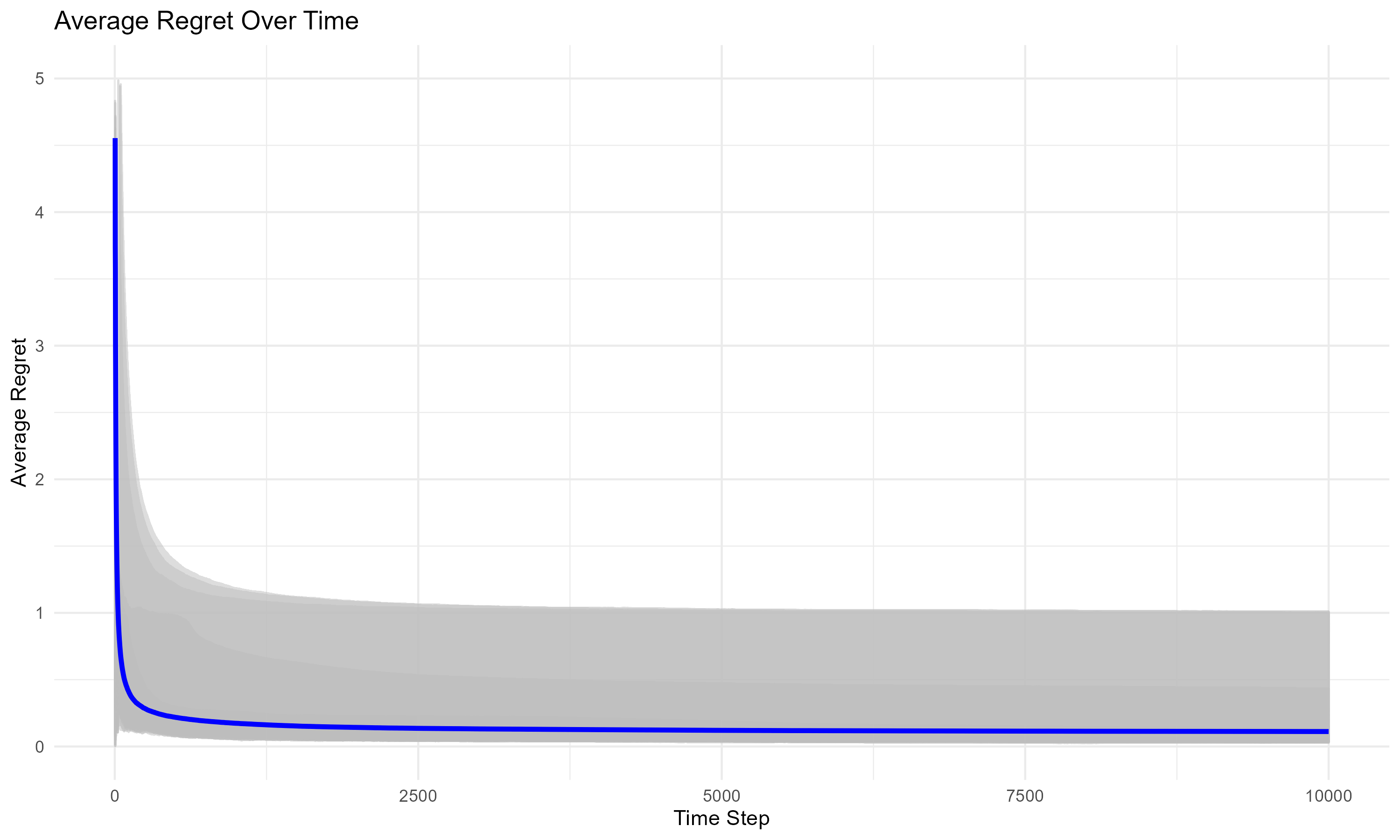}
    \caption{Average regret over time under correct (left) and misspecified (right) models. The left panel represents the correctly specified model in Example 2, while the right panel represents the misspecified model with $\delta = 0.1$. The grey band marks the upper and lower bounds of the average regret over 50 simulations, and the blue line represents the mean average regret.}\label{fig:average_regret}
\end{figure}

\section{Main Results}
Now I want to show the first main result of this paper, that is given the model misspecification, the posterior distribution of the parameter will concentrate around a pseudo-true parameter set $\Theta^\dagger$.

\textit{\textbf{Assumption 2}} (psuedo-truth set): \textit{There exist a subset $\Theta^{\dagger} \subset \mathcal{P}$, for any $\theta$ not in $\Theta^{\dagger}$, the following inequality holds:}
\begin{equation}
\sup_{\theta^{\dagger} \in \Theta^{\dagger}} \inf_{a \in \mathcal{A}} \mathcal{K}\left(v_\theta^{a} \mid v^*\right)-\mathcal{K}\left(v_{\theta^{\dagger}}^{a} \mid v^*\right) \geq \epsilon(\theta)>0
\end{equation}
for some positive $\epsilon(\theta)$. 

The intuition behind this assumption is said to be that the pseudo-truth set $\Theta^{\dagger}$ is the set of parameters such that it can not be uniformly dominated by any other parameter in the parameter space $\mathcal{P}$ in terms of the KL divergence. In other words, for any parameter which is not in the pseudo-truth set, there exists a positive gap between the KL divergence of a certain parameter and the true parameter for any action and covariate distribution.


\begin{theorem}
\label{theorem:pseudo_truth_set}
Suppose Assumption 1-2 hold, then under Thompson Sampling $\tau$, there exisits absolute constants $a_{\Theta^{\dagger}}, b_{\Theta^{\dagger}} > 0$ such that: \footnote{
One possible choice is $
2 \max \left\{ \frac{1 - \pi_0(\Theta^{\dagger})}{\pi_0(\Theta^{\dagger})}, 2(|\mathcal{P}| - |\Theta^\dagger|) \right\} $
and $b_0 = \min \left\{ \frac{\epsilon}{2}, \frac{\epsilon^2}{8d^2} \right\}$, where the $d$ is a positive constant that depends on the specific problem. See more details in the Appendix.
}

\begin{equation}
\mathbb{E}_{\theta^*}^{\tau}\left[1- \pi(\Theta^\dagger)\right] \leq a_{\Theta^{\dagger}} e^{-b_{\Theta^{\dagger}} t}
\end{equation}

\end{theorem}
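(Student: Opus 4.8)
The plan is to show that each parameter outside the pseudo-truth set loses posterior mass at an exponential rate, and then to sum the finitely many such bounds. Fix $\theta\in\mathcal{P}\setminus\Theta^{\dagger}$. Since $\Theta^{\dagger}$ is finite, Assumption~2 lets me pick a single competitor $\theta^{\dagger}=\theta^{\dagger}(\theta)\in\Theta^{\dagger}$ with $\mathcal{K}(v_\theta^{a}\mid v^{*})-\mathcal{K}(v_{\theta^{\dagger}}^{a}\mid v^{*})\ge\epsilon(\theta)$ for \emph{every} $a\in\mathcal{A}$; the content of this step is that one fixed $\theta^{\dagger}$ dominates $\theta$ uniformly over actions, so nothing below will depend on which actions Thompson Sampling actually plays. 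Keeping only the $\theta^{\dagger}$ term in the denominator of \eqref{eq:posterior},
\[
\pi_t(\theta)\;\le\;\frac{\pi_0(\theta)}{\pi_0(\theta^{\dagger})}\,\frac{\mathcal{L}_\theta(H_t)}{\mathcal{L}_{\theta^{\dagger}}(H_t)}\;=\;\frac{\pi_0(\theta)}{\pi_0(\theta^{\dagger})}\exp\!\Big(\sum_{s=0}^{t-1}\ell_s\Big),\qquad \ell_s:=\log\frac{f_\theta(R_s\mid A_s,X_s)}{f_{\theta^{\dagger}}(R_s\mid A_s,X_s)} .
\]

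Next I would separate drift from noise. Let $\mathcal{F}_s$ denote the information in $(H_s,X_s,A_s)$; since $R_s\sim v^{*}(\cdot\mid A_s,X_s)$ given $\mathcal{F}_s$, the conditional mean $m_s:=\mathbb{E}_{\theta^{*}}[\ell_s\mid\mathcal{F}_s]=\mathbb{E}_{v^{*}(\cdot\mid A_s,X_s)}\big[\log\tfrac{f_\theta}{f_{\theta^{\dagger}}}\big]$ is exactly the KL gap of Assumption~2 evaluated at the realized action, hence $m_s\le-\epsilon(\theta)$ for every realization of $A_s$. Then $M_t:=\sum_{s=0}^{t-1}(\ell_s-m_s)$ is a mean-zero $(\mathcal{F}_s)$-martingale, $\sum_{s<t}\ell_s\le -t\,\epsilon(\theta)+M_t$, and therefore $\pi_t(\theta)\le\frac{\pi_0(\theta)}{\pi_0(\theta^{\dagger})}\,e^{-t\epsilon(\theta)}\,e^{M_t}$. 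To control $M_t$ I would impose a uniform sub-Gaussian bound $\mathbb{E}[e^{\lambda(\ell_s-m_s)}\mid\mathcal{F}_s]\le e^{\lambda^{2}d^{2}/2}$ with a problem-dependent proxy $d$ (this is the constant $d$ in the footnote; in the Gaussian-noise models of Section~3 it holds because $\ell_s-m_s=\sigma^{-2}\big(r_{\theta^{\dagger}}(A_s,X_s)-r_\theta(A_s,X_s)\big)\epsilon_s$ is conditionally Gaussian with variance bounded over a compact action set), so that an Azuma--Hoeffding bound for martingales yields $\mathbb{P}_{\theta^{*}}\big(M_t\ge t\epsilon(\theta)/2\big)\le\exp\!\big(-t\epsilon(\theta)^{2}/(8d^{2})\big)$.

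Combining the two regimes — on $\{M_t<t\epsilon(\theta)/2\}$ we get $\pi_t(\theta)\le\frac{\pi_0(\theta)}{\pi_0(\theta^{\dagger})}e^{-t\epsilon(\theta)/2}$, and $\pi_t(\theta)\le1$ always — would give
\[
\mathbb{E}_{\theta^{*}}^{\tau}[\pi_t(\theta)]\;\le\;\Big(\frac{\pi_0(\theta)}{\pi_0(\theta^{\dagger})}+1\Big)\exp\!\Big(-t\,\min\big\{\tfrac{\epsilon(\theta)}{2},\ \tfrac{\epsilon(\theta)^{2}}{8d^{2}}\big\}\Big).
\]
Finally, setting $\epsilon:=\min_{\theta\notin\Theta^{\dagger}}\epsilon(\theta)>0$ (a minimum over the finite set $\mathcal{P}\setminus\Theta^{\dagger}$) and summing over $\theta\notin\Theta^{\dagger}$ via $1-\pi_t(\Theta^{\dagger})=\sum_{\theta\notin\Theta^{\dagger}}\pi_t(\theta)$, the $|\mathcal{P}|-|\Theta^{\dagger}|$ terms and the prior ratios collapse into $a_{\Theta^{\dagger}}$, while $b_{\Theta^{\dagger}}=\min\{\epsilon/2,\ \epsilon^{2}/(8d^{2})\}$, matching the choices flagged in the footnote.

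The conceptual reduction to a single dominating $\theta^{\dagger}$ is immediate from Assumption~2; I expect the concentration step to be the real obstacle. A bare KL gap pins down only the conditional mean $m_s$, not the fluctuations of $\sum_s\ell_s$, so one genuinely needs a uniform (over $a\in\mathcal{A}$ and over $s$) moment or MGF condition on the log-likelihood-ratio increments — transparent for Gaussian noise on a compact action set, but in general requiring an explicit tail assumption (sub-exponential increments plus a Bernstein inequality, or bounded likelihood ratios) together with a mild absolute-continuity condition $v^{*}\ll v_{\theta^{\dagger}}$ so that $\ell_s$ and the KL terms are finite. A minor bookkeeping point is that $A_s$ depends on the Thompson-Sampling randomization, so $\mathcal{F}_s$ must be enlarged to include it; conditioning on $(H_s,X_s,A_s)$ still leaves $R_s\sim v^{*}(\cdot\mid A_s,X_s)$, which is all the drift computation uses.
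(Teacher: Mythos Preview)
Your argument is correct and follows the same skeleton as the paper's proof: decompose the cumulative log-likelihood ratio into a predictable drift (bounded below by $\epsilon$ via Assumption~2) plus a martingale with bounded/sub-Gaussian increments, then apply Azuma--Hoeffding with threshold $\delta=\epsilon/2$ to obtain the two exponential rates $\epsilon/2$ and $\epsilon^{2}/(8d^{2})$. The one structural difference is bookkeeping: the paper compares each $\gamma\notin\Theta^{\dagger}$ against the \emph{aggregate} likelihood $\mathcal{L}_{\Theta^{\dagger}}(H_t)=\sum_{\theta\in\Theta^{\dagger}}\mathcal{L}_\theta(H_t)\tilde{\pi}_0(\theta)$ and bounds $1-\pi_t(\Theta^{\dagger})$ as a whole, whereas you bound each $\pi_t(\theta)$ separately by pairing $\theta$ with a single dominating $\theta^{\dagger}(\theta)\in\Theta^{\dagger}$ (whose existence is immediate from finiteness of $\Theta^{\dagger}$) and then sum. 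Your route is arguably cleaner, since it sidesteps the need to transfer the KL-gap of Assumption~2 from individual $\theta^{\dagger}$'s to the mixture $v_{\Theta^{\dagger}}$, a step the paper handles rather informally; the cost is only that your prior-ratio constant involves $\pi_0(\theta^{\dagger}(\theta))$ rather than $\pi_0(\Theta^{\dagger})$, which can inflate $a_{\Theta^{\dagger}}$ but not the rate. Your caveat that a bare KL gap controls only the drift and that one genuinely needs a bounded-increment or sub-Gaussian condition on $\ell_s$ is exactly the content of the paper's Lemma~1, which invokes compactness of $\mathcal{A}$ for the same purpose.
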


\begin{proof}[Sketch of Proof] 

Recall from the \eqref{eq:posterior}, we can rewrite the posterior on $\theta$ as:

    \begin{equation}
        \begin{aligned}
            \pi_t(\theta) & =\frac{\sum_{\theta \in \theta} \mathcal{L}_\theta\left(H_t\right) \pi_0(\theta)}{\sum_{\gamma \in \mathcal{P}} \mathcal{L}_\gamma\left(H_t\right) \pi_0(\gamma)} \\
            & = \frac{ \mathcal{L}_{\theta}\left(H_t\right) \pi_0(\theta)}{\sum_{\gamma \in \mathcal{P}} \mathcal{L}_\gamma\left(H_t\right) \pi_0(\gamma)} \\
            & = \frac{1}{1+\sum_{\gamma \notin \theta} c_\gamma\left(\frac{\mathcal{L}_\gamma\left(H_t\right)}{\mathcal{L}_{\theta}\left(H_t\right)}\right)} \\
            & = \frac{1}{1+\sum_{\gamma \notin \theta} c_\gamma \exp \left(-\sum_{s=0}^t \log \Lambda_s^\gamma\right)}
        \end{aligned}
\end{equation}

Therefore, our main goal is to characterize the behavior of the term:

$$
\sum_{\gamma \notin \theta} c_\gamma \exp \left(-\sum_{s=0}^t \log \Lambda_s^\gamma\right)
$$

This quantity can be written as a sum of submartingale components, which we decompose into a martingale sequence plus a predictable, non-decreasing sequence.  

The key step is to show that the non-negative increasing sequence is bounded by a constant. The condition is true under our Assumption 2, as ensures that in expectaion the algorithm will always prefer the parameter in the pseudo-truth set $\Theta^{\dagger}$ rather than the other parameters regardless of which actions are selected.

\end{proof}

The result above shows that the posterior distribution of the parameter will concentrate exponentially around the pseudo-true parameter set $\Theta^{\dagger}$. This is a generalization of the pseudo-truth parameter of \textcite{white1982maximum} to the dynamic decision-making problem. The intuition behind this result is that the pseudo-truth set $\Theta^{\dagger}$ is the set of parameters that cannot be uniformly dominated by any other parameter in the parameter space $\mathcal{P}$ in terms of the KL divergence. And since this results are independent of the action distribution, the algoritham such as Thompson Sampling will gradually assign less probability to the parameters that are not in the pseudo-truth set $\Theta^{\dagger}$. 

Moreover, we can also get a bound on the average regret of the Thompson Sampling algorithm.
\textbf{Corollary 1} \textit{Suppose Assumption 1-2 holds, then Thompson Sampling $\tau$ average per-period regret is bounded by an absolute constant $C_{\Theta^{\dagger}}$:}
\begin{equation}
AR(T, \tau) \leq C_{\Theta^{\dagger}}
= O(1)
\end{equation}

It is not obvious that if the pseudo-truth set is tight enough. One may be more interested in characterizing the behavior of the parameter posterior within the pseudo-truth set. Intuitively, since the distribution of the action is endogenous to the algorithm itself, this recursive decision rule may depends on the specific functional form of the expeted rewards. In the following section, we will try to give several necessary conditions of the further convergence within the pseudo-truth set.

\section{Convergence within the Pseudo-Truth Set}

Proposition 1 gives a benchmark result of the convergence of the parameter posterior under model misspecification. It shows that the exponential convergence rate attributes can still be guaranteed the periphery of pseudo-truth set $\Theta^\dagger$.

However, one more interesting question will be how the parameter posterior behaves within the pseudo-truth set. To address this question, we can characterize the process as a dynamic system and apply stochastic approximation techniques. Before we move on to the formal analysis, let us first characterize the necessary conditions for the convergence of the parameter posterior within the pseudo-truth set.

\subsection{Posterior Concentration Criteria}

In this section, we investigate under what conditions the posterior distribution 
$\pi_t(\theta)$ can \emph{concentrate} on a particular parameter or subset of parameters in the pseudo-truth set $\Theta^\dagger$. 

\begin{definition}[Individual Posterior Concentration]
\label{def:IndividualConcentration}
We say the posterior concentrates on a parameter $\theta \in \Theta^\dagger$ in expectation if
$$
   \lim_{t \to \infty} \mathbb{E}_{\theta^*}\!\bigl[\pi_t(\theta)\bigr] \;=\; 1
$$
\end{definition}

It is possible, however, that no single parameter will achieve this limit. Thuse we give the following second definition. 

\begin{definition}[Collective Posterior Concentration]
\label{def:CollectiveConcentration}
We say \emph{the posterior concentrates on a subset} $S \subseteq \Theta^\dagger$ 
\emph{in expectation} if
$$
   \lim_{t \to \infty} \mathbb{E}_{\theta^*}\!\Bigl[\sum_{\theta \in S} \pi_t(\theta)\Bigr] 
   \;=\; 1
$$
\end{definition}

Recall from the proof of Proposition 1, we can rewrite the posterior on $\theta$ as:

\begin{equation}
    \begin{aligned}
        \pi_t(\theta) & =\frac{\sum_{\theta \in \theta} \mathcal{L}_\theta\left(H_t\right) \pi_0(\theta)}{\sum_{\gamma \in \mathcal{P}} \mathcal{L}_\gamma\left(H_t\right) \pi_0(\gamma)} \\
        & = \frac{ \mathcal{L}_{\theta}\left(H_t\right) \pi_0(\theta)}{\sum_{\gamma \in \mathcal{P}} \mathcal{L}_\gamma\left(H_t\right) \pi_0(\gamma)} \\
        & = \frac{1}{1+\sum_{\gamma \notin \theta} c_\gamma\left(\frac{\mathcal{L}_\gamma\left(H_t\right)}{\mathcal{L}_{\theta}\left(H_t\right)}\right)} \\
        & = \frac{1}{1+\sum_{\gamma \notin \theta} c_\gamma \exp \left(-\sum_{s=0}^t \log \Lambda_s^\gamma\right)}
    \end{aligned}
\end{equation}

If we assume uniform prior then we can ignore the constant term $c_\gamma$. 
\subsection{Partition and Ranking}

Let $\Theta$ be a finite parameter space, and let $\mathcal{A}$ be the action space. Suppose each parameter $\theta \in \Theta$ induces a reward model $f_{\theta}(a)$, and define \footnote{Please note here the $\phi$ is different from the decision rule $\mu$ in the Thompson Sampling algorithm. The former is just a mapping from the parameter space to the action space, while the latter is a mapping from the history to the action space. However, when the true parameter is known, the decision rule $\mu$ will gives actions coincides with the mapping $\phi$ for every period.}
$$
  \phi: \Theta \;\to\; \mathcal{A}^{\circ} \;\subseteq\; \mathcal{A},
  \quad
  \phi(\theta) \;=\;\arg\max_{a \in \mathcal{A}} f_{\theta}(a)
$$
Here, $\mathcal{A}^{\circ} \subseteq \mathcal{A}$ denotes the finite set of all actions that can be the $\arg\max$ of $f_{\theta}$ for some $\theta \in \Theta$.

\begin{definition}[Partition and Ranking]
\label{def:Partition}
Fix an action $a \in \mathcal{A}^{\circ}$. Let's denote a measure of ``fit'' under the true DGP as $\Delta(\theta,a)$. 
\begin{itemize}
  \item For example, $\Delta(\theta,a)$ could be defined via expected log‐likelihood ratios relative to the true reward law, or in a Gaussian noise model proportional to $\|g(a) - f_{\theta}(a)\|^2.$
  \item We say $\theta_1 \succ_a \theta_2$ if $\Delta(\theta_1,a) < \Delta(\theta_2,a).$ 
\end{itemize}
Thus for every $a \in \mathcal{A}^{\circ}$, the parameter space $\Theta$ can be partitioned into subsets $\Theta_a$ with the following ordering:
$$
  \Theta_a(1) \;\succ_a\; \Theta_a(2) 
  \;\succ_a\; \dots 
  \;\succ_a\; \Theta_a(N_a)
$$
where $\Theta_a(k)$ collects the parameters in $\Theta_a$ whose $\Delta(\theta,a)$‐value is the $k$‐th smallest among $\Theta_a$. Here $N_a = \#\Theta_a$ is a number depending on the action $a$.
\end{definition}

\begin{remark}[Likelihood Ratio Perspective]
Let $R_t$ be the observed reward at time $t$, and let $a_t$ be the chosen action at time $t$. 
For any two parameters $\gamma$ and $\theta$, define the likelihood ratio
$$
  \Lambda_t^{\gamma, \theta}
  \;=\;
  \frac{\mathcal{L}_{\gamma}(R_t \mid a_t)}{\mathcal{L}_{\theta}(R_t \mid a_t)}
$$
where $\mathcal{L}_{\gamma}$ is the likelihood of observing $R_t$ under $f_{\gamma}$. 
Then, the expected log‐ratio
$$
  \mathbb{E}\bigl[\log \Lambda_t^{\gamma,\theta}\bigr]
  \;\;\propto\;\;
  \Delta(\theta,a_t) \;-\;\Delta(\gamma,a_t)
$$
Hence, if $\Delta(\gamma,a) < \Delta(\theta,a)$, then repeated observations at action $a$ push the posterior mass away from $\theta$ and toward $\gamma$. 
\end{remark}

\subsection{Necessary Conditions for Individual Concentration}

Now we can state the necessary condition for the individual concentration of the parameter posterior.

\begin{proposition}[Necessary Condition for Individual Concentration]\footnote{This is also the necessary conditionf for the pathwise individual concentration of the parameter posterior. See as the disccussion in the next section.}
\label{prop:NecessaryCondition}
Let $\pi_t(\theta)$ be the posterior weight on $\theta$ at time $t$ under Thompson Sampling $\tau$. Suppose for some $\theta^* \in \Theta$, we have
$$
  \lim_{t\to\infty} \mathbb{E}\bigl[\pi_t(\theta^*)\bigr]
  \;=\; 1
$$
Then $\theta^*$ must lie in $\Theta_{\phi(\theta^*)}(1)$, that is, $\theta^*$ is one of the best‐fitting parameters within its own preferred action $\phi(\theta^*)$. Formally,
$$
  \Delta(\theta^*,\,\phi(\theta^*))
  \;=\;
  \min_{\gamma \in \Theta_{\phi(\theta^*)}}
  \Delta(\gamma,\,\phi(\theta^*))
$$
\end{proposition}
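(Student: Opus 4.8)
The plan is to argue by contraposition: I will show that if $\theta^*$ is \emph{not} a best-fitting parameter within its own preferred action cell $\Theta_{\phi(\theta^*)}$, then $\mathbb{E}[\pi_t(\theta^*)]$ cannot tend to $1$. So suppose there exists $\gamma \in \Theta_{\phi(\theta^*)}$ with $\Delta(\gamma, \phi(\theta^*)) < \Delta(\theta^*, \phi(\theta^*))$. Using the posterior representation derived in the proof of Theorem~\ref{theorem:pseudo_truth_set}, namely $\pi_t(\theta^*) = \bigl(1 + \sum_{\eta \neq \theta^*} c_\eta \exp(-\sum_{s=0}^t \log \Lambda_s^\eta)\bigr)^{-1}$, concentration on $\theta^*$ in expectation forces $\mathbb{E}_{\theta^*}^\tau\bigl[c_\gamma \exp(-\sum_{s=0}^t \log \Lambda_s^{\gamma,\theta^*})\bigr] \to 0$, since this is one nonnegative term in the denominator sum and $\pi_t(\theta^*)\le 1$ always. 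Equivalently, the posterior odds ratio $\pi_t(\gamma)/\pi_t(\theta^*) = c_\gamma \exp(\sum_{s=0}^t \log \Lambda_s^{\gamma,\theta^*}) \to 0$ in expectation, so $\theta^*$ must out-accumulate $\gamma$ in log-likelihood along the realized action path.

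The second step is to show that this cannot happen, by lower-bounding the drift of $\sum_s \log \Lambda_s^{\gamma,\theta^*}$. The key observation is that whenever $\pi_t(\theta^*)$ is close to $1$, Thompson Sampling samples $\theta_t = \theta^*$ with probability close to $1$, hence plays $A_t = \phi(\theta^*)$ with probability close to $1$. Conditionally on $A_t = \phi(\theta^*)$, the Remark on the likelihood-ratio perspective gives $\mathbb{E}[\log \Lambda_t^{\gamma,\theta^*} \mid A_t = \phi(\theta^*)] \propto \Delta(\theta^*, \phi(\theta^*)) - \Delta(\gamma, \phi(\theta^*)) > 0$, a strictly positive constant. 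Therefore, on the event that the posterior has concentrated, the increments of $\sum_s \log \Lambda_s^{\gamma,\theta^*}$ have positive expected drift bounded away from zero, which (after controlling the martingale part via a Hoeffding-type bound on the bounded-variance log-likelihood-ratio increments, analogous to the submartingale decomposition used in Theorem~\ref{theorem:pseudo_truth_set}) forces $\sum_{s=0}^t \log \Lambda_s^{\gamma,\theta^*} \to +\infty$, hence $\pi_t(\gamma)/\pi_t(\theta^*) \to +\infty$ rather than $0$ --- contradicting $\mathbb{E}[\pi_t(\theta^*)] \to 1$. This is the heart of the argument; it says that $\theta^*$'s own optimal action is exactly the action that $\gamma$ (being a better fit there) exploits to overtake it.

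Making this rigorous requires handling the feedback loop between the posterior and the action distribution, and that is where I expect the main obstacle. The clean statement ``$\pi_t(\theta^*)$ close to $1$ implies $A_t = \phi(\theta^*)$ w.h.p.'' is a pointwise statement, but we only control $\mathbb{E}[\pi_t(\theta^*)]$; I would use a Markov/Fubini argument to convert the expectation bound into a statement that $\pi_t(\theta^*) \ge 1 - \delta$ on an event of probability $\ge 1 - \sqrt{\delta}$ (say), then run the drift argument on that event and absorb the complementary event into an error term. One also has to be careful that $\phi(\theta^*)$ is well-defined (the $\arg\max$ is unique, or a tie-breaking rule is fixed) and that $\Delta$ is indeed proportional to the expected log-likelihood-ratio gap as asserted in the Remark --- I would either take that Remark as a maintained hypothesis or verify it in the Gaussian case where $\Delta(\theta,a) \propto \|g(a) - f_\theta(a)\|^2$. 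A subtlety worth flagging: if $\gamma$ also has $\phi(\gamma) = \phi(\theta^*)$ and $\Delta(\gamma,\phi(\theta^*)) = \Delta(\theta^*,\phi(\theta^*))$ (a tie at the top of the cell), the strict inequality fails and neither can individually concentrate --- this is precisely why the Collective Concentration notion of Definition~\ref{def:CollectiveConcentration} is needed, and I would remark on it rather than treat it as a gap in the proof.
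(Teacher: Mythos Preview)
Your proposal is correct and follows the same contrapositive argument as the paper's sketch: if some $\gamma$ fits strictly better at $a^*=\phi(\theta^*)$, then posterior dominance of $\theta^*$ forces $a^*$ to be played infinitely often, and the resulting positive drift in $\sum_s \log \Lambda_s^{\gamma,\theta^*}$ pushes mass toward $\gamma$, a contradiction. The paper leaves it at that one-paragraph sketch, whereas you additionally spell out the posterior-odds representation, the Markov/Fubini step to pass from $\mathbb{E}[\pi_t(\theta^*)]\to 1$ to a high-probability event, and the Azuma-type control of the martingale part---all of which are natural fleshings-out rather than a different route; just watch the sign in your first displayed quantity (it should be $c_\gamma\exp(+\sum_s\log\Lambda_s^{\gamma,\theta^*})$, as you correctly write one line later).
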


\begin{proof}[Sketch of Proof]
Let $a^* = \phi(\theta^*)$. 
If $\theta^*$ \emph{did not} belong to the top rank $\Theta_{a^*}(1)$, then there would exist some $\gamma \in \Theta_{a^*}(1)$ such that $\Delta(\gamma,a^*) < \Delta(\theta^*,a^*)$. 
In that case, every time action $a^*$ is selected (infinitely often if $\theta^*$ were to dominate the posterior), the likelihood ratio $\Lambda_t^{\gamma,\theta^*}$ would, on average, exceed 1, pushing posterior mass toward $\gamma$. 
Hence, $\pi_t(\theta^*)$ cannot converge to 1 in expectation. 
Thus $\theta^*$ must be in $\Theta_{a^*}(1)$. 
\end{proof}

\begin{remark}[Connection with Psedo-truth set]
    It is easy to notice that the Psedo-truth set defined in the Assumption 2 is a largest subset of the set of parameters that satisfies the necessary condition in \ref{theorem:pseudo_truth_set}. That is any parameter never lies in the first class partitions will not belongs to the Psedo-truth set and we can have $\Theta^\dagger = \cup_{a \in \mathcal{A}^\circ} \Theta_a(1)$
\end{remark}

If the according $\Theta(1)$ is a singleton set, then the equilibrium condition is also sufficient. However, unfortunately, this is usually not the case when the the parameter space has higher dimension than the action space. For instance, in our example 2, we can find infinite parameter vector such that the function $f_\theta$ goes through a fixed point $(a, r)$ on the $\mathbb{R}^2$ plane. In general, the collective posterior concentration is a more common outcome than the individual concentration.

\subsection{Overshadowing and Graph Representation}

The collective posterior concentration can happen in quite different ways. For example, it could be the case that multiple single parameters can be concentrated by the posterior distribution separately (Figure \ref{fig:sub1}). In that case all those parameters satisfies the necessary condition for individual concentration. It could also be the case that the posterior distribution concentrates on a subset of parameters (Figure \ref{fig:sub2}). Then the parameter sample from the posterior distribution will cycle among the subset of parameters.

\begin{figure}[H]
    \centering
    \begin{subfigure}[b]{0.4\textwidth}
        \includegraphics[width=\textwidth]{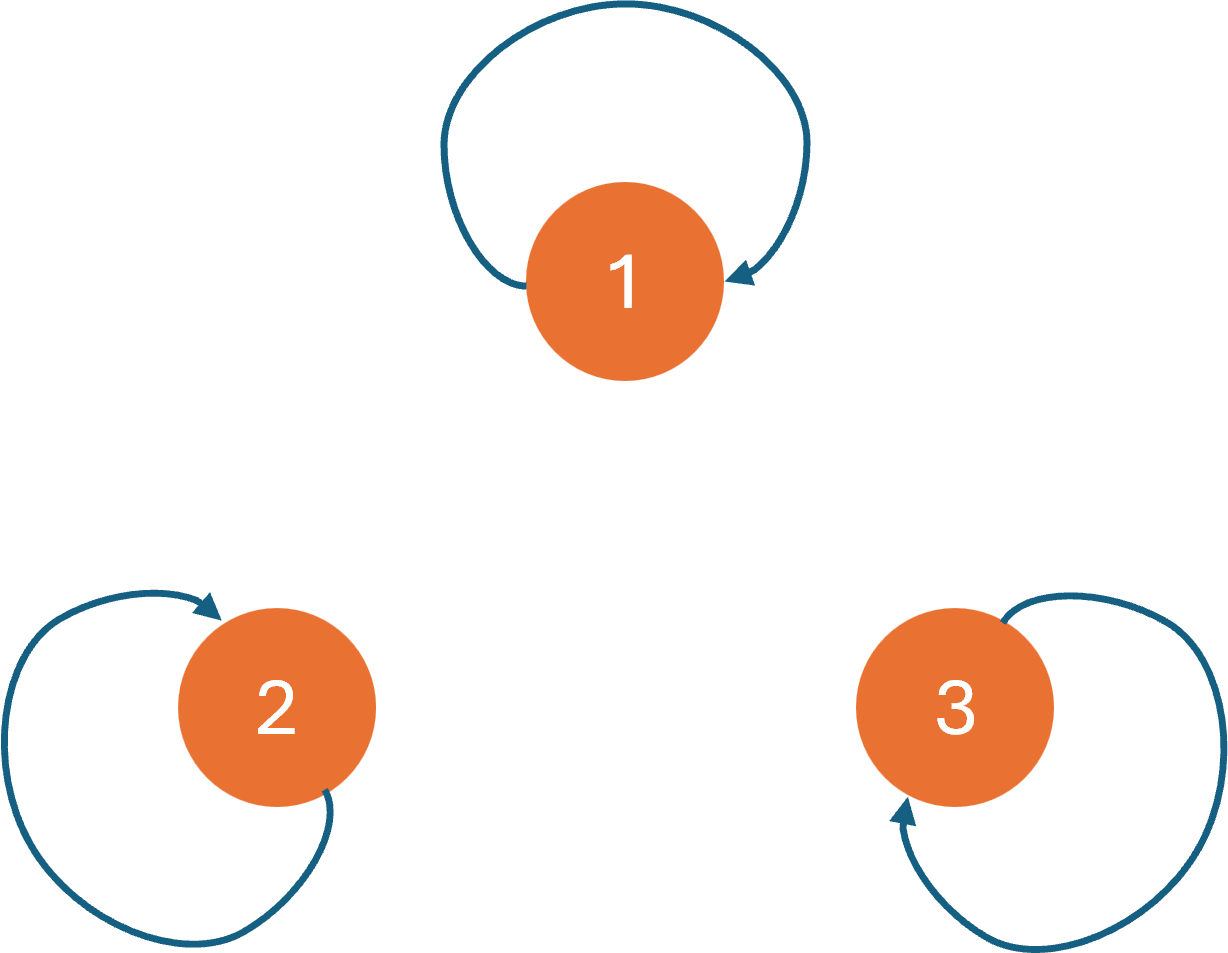}
        \caption{}
        \label{fig:sub1}
    \end{subfigure}
    \quad
    \begin{subfigure}[b]{0.4\textwidth}
        \centering
        \includegraphics[width=0.5\textwidth]{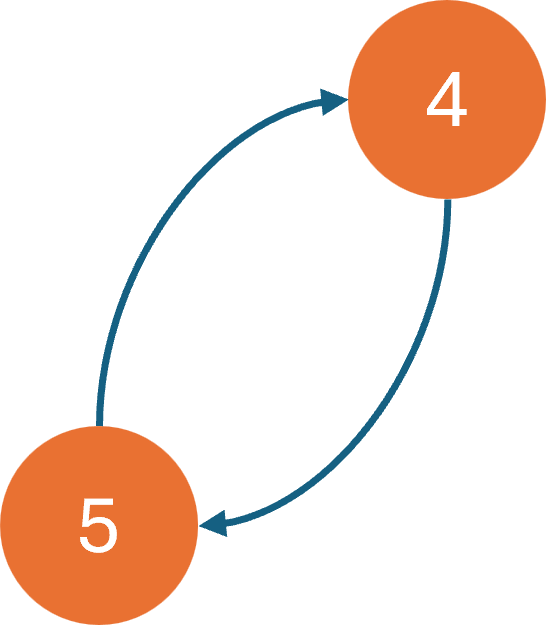}
        \caption{}
        \label{fig:sub2}
    \end{subfigure}
    \caption{Two different cases of the collective posterior concentration.}
    \label{fig:collective_concentration}
\end{figure}

If the different patterns like in Figure \ref{fig:sub1} and Figure \ref{fig:sub2} can coexist, then at the expectation level, then we can only guarantee the expected posterior distribution will concentrate on a the union of all those possible parameters (parameter 1-5 in Figure \ref{fig:collective_concentration} should be all included).

\begin{definition}[Overshadowing Edge]
\label{def:overshadowEdge}
Let $\theta, \gamma \in \Theta$. We say there is an overshadowing edge
$\theta \to \gamma$ if, $i \geq j, \text{ with } \theta \in \Theta_{\phi(\theta)}(i)  \text{ and } \gamma \in \Theta_{\phi(\theta)}(j)$.

\end{definition}

If $\theta \to \gamma$ in this sense, then whenever the algorithm samples 
action $\phi(\theta)$ infinitely often while $\theta$ has nonnegligible mass, 
the ratio $\pi_t(\gamma)/\pi_t(\theta)$ typically grows unbounded, 
so $\theta$ is eventually ``ruled out'' by $\gamma$. 
By contrast, if the sampling or posterior updates \emph{switch away} from 
$\phi(\theta)$ too soon, $\theta$ might never be fully overshadowed in practice.

We now view $\Theta$ as the vertex set of a directed graph $G=(\Theta,E)$ 
whose edges are precisely the overshadowing edges $\theta \to \gamma$.  
A subset of parameters $S\subseteq\Theta$ induces a subgraph $G[S]$.

\begin{definition}[Closed, Strongly Connected Subgraph]
A set $S\subseteq\Theta$ is called \emph{closed} under overshadowing if no vertex 
in $S$ has an outgoing edge in $G$ to a vertex outside $S$. That is,
$$
  \forall\,\theta\in S,\;\;\text{if}\;\theta\to\gamma,\;\text{then}\;
   \gamma\in S.
$$
We say $S$ is \emph{strongly connected} under overshadowing 
if for every pair $\theta,\gamma \in S$, there is a directed path 
$\theta \to \cdots \to \gamma$ within $G[S]$. 
\end{definition}

\begin{definition}[Jointly Closed Under Overshadowing]
    \label{def:JointlyClosed}
    A subset $S \subseteq \Theta$ is called \emph{jointly closed} under overshadowing 
    if there exists a stable posterior distribution 
    $p(\cdot)$ supported on $S$ such that, for every 
    $\gamma \notin S$, $\gamma$ cannot achieve a strictly positive 
    log‐likelihood ratio against $p(\cdot)$ on any action 
    $a \in \mathcal{A}^o(S)$.
    
    Equivalently, $S$ is jointly closed if there exists $p(\cdot)$ on $S$ 
    satisfying:
    $$
        \sum_{\theta \in S} p(\theta)
            \;\mathbb{E}\Bigl[\log \Lambda_s^{\gamma,\theta} \,\Bigm|\,
            A_s = a\Bigr]
        \;\le\; 0
        \quad
        \text{for all } 
        \gamma\notin S 
        \text{ and } 
        a \in \mathcal{A}^o(S).
    $$
\end{definition}

\subsection{Necessary Conditions for Collective Concentration}

As discussed, a bandit or Bayesian‐learning process can concentrate its posterior on 
a subset of parameters $S \subseteq \Theta$ along a given sample path 
(\emph{pathwise}), but there may be multiple such subsets that arise 
on different sample paths due to stochasticity in rewards. 
Hence, while we might establish conditions under which some run 
converges to $S$, the expected posterior across all runs 
could still be split among multiple such sets.

\begin{definition}[Pathwise Collective Concentration]
\label{def:PathwiseCollectiveConcentration}
We say the posterior collectively concentrates on $S\subseteq\Theta$ 
pathwise if there is a positive‐probability event $\Omega_S$ such that, 
on every sample path $\omega \in \Omega_S$, 
$$
  \lim_{t\to\infty} \sum_{\theta \in S}\pi_t(\theta)(\omega)
  \;=\; 1.
$$
That is, with positive probability, the posterior eventually allocates total mass 
1 to $S$ in a single run.
\end{definition}

The next proposition provides a necessary condition for this pathwise collective concentration, that is $S$ must form 
a jointly closed and strongly connected component in a suitable 
overshadowing graph. We will then show how the 
expected posterior might combine these different pathwise attractors.

\begin{proposition}[Necessary Conditions for Pathwise Collective Concentration]
    \label{prop:PathwiseCollectiveConcNecessary}
    Suppose there is a subset $S \subseteq \Theta$ and a positive‐probability event 
    $\Omega_S$ such that, for every sample path $\omega \in \Omega_S$, 
    $$
      \lim_{t\to\infty} \sum_{\theta\in S}\pi_t(\theta)(\omega) \;=\; 1.
    $$
    Then the following conditions must hold:
    \begin{enumerate}
    \item \textbf{Jointly Closed:} 
    For every parameter $\gamma \notin S$, given the distribution of actions 
    which are played infinitely often on $\Omega_S$, $\gamma$ is overshadowed
    by the collective mixture of parameters in $S$.

    \item \textbf{Internal Cohesion:}
    Within $S$, no parameter is permanently ruled out by the others. 
    Equivalently, for every pair $\theta, \gamma \in S$, there is a directed path 
    $\theta \to \cdots \to \gamma$ in the induced subgraph $G[S]$.  
    \end{enumerate}
    \end{proposition}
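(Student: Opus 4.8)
The plan is to derive both conditions by contradiction from the hypothesis $\sum_{\theta\in S}\pi_t(\theta)(\omega)\to 1$ on $\Omega_S$, using a single device: a telescoping/martingale decomposition of logarithmic posterior ratios whose compensator is driven, via the Likelihood Ratio Perspective remark, by the fit measures $\Delta(\cdot,a)$. First I would pass to a positive-probability sub-event $\Omega_S'\subseteq\Omega_S$ on which the set of actions played infinitely often is a deterministic set $\mathcal{A}_\infty$ and the posterior restricted to $S$ admits a limit point $p$ with $\mathrm{supp}(p)=S$; since $\pi_t(S)\to 1$, the sampled parameter $\theta_t$ eventually lies in $S$, so its chosen action is $\phi(\theta_t)$ and hence $\mathcal{A}_\infty=\phi(S)=\mathcal{A}^{o}(S)$, and every $\theta\in S$ (being in $\mathrm{supp}(p)$) is sampled --- and $\phi(\theta)$ played --- infinitely often. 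The elementary identity I would exploit is that for any $\gamma\notin S$, because $\sum_{\theta\in S}p(\theta)=1$ the normalizing constant of $\pi_t$ cancels, giving $U_t^{\gamma}:=\log\pi_t(\gamma)-\sum_{\theta\in S}p(\theta)\log\pi_t(\theta)=U_0^{\gamma}+\sum_{s<t}\sum_{\theta\in S}p(\theta)\log\Lambda_s^{\gamma,\theta}$; and $\pi_t(\gamma)\to 0$ together with $\pi_t(\theta)\to p(\theta)>0$ forces $U_t^{\gamma}\to-\infty$.

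For the \textbf{jointly closed} condition, I would decompose $\sum_{s<t}\sum_{\theta\in S}p(\theta)\log\Lambda_s^{\gamma,\theta}=M_t+B_t$ into a martingale $M_t$ --- which has bounded conditional second moments, since in the Gaussian model with known variance each $\log\Lambda_s^{\gamma,\theta}$ is affine in $R_s$ and $\Theta$ is finite, whence $M_t=o(t)$ a.s. by the strong law for martingale differences --- and a compensator $B_t$ whose increment at step $s$ equals, up to the remark's proportionality constant, $\sum_{\theta\in S}p(\theta)\bigl(\Delta(\theta,A_s)-\Delta(\gamma,A_s)\bigr)$. Since $U_t^{\gamma}\to-\infty$ while $M_t=o(t)$, the compensator must diverge to $-\infty$; dropping the finitely many steps at actions outside $\mathcal{A}_\infty$ and grouping the rest by action, $B_t$ is the sum over $a\in\mathcal{A}_\infty$ of the per-action drift $\sum_{\theta\in S}p(\theta)(\Delta(\theta,a)-\Delta(\gamma,a))$ weighted by the empirical frequency of $a$. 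Hence, with respect to the realized action distribution, $\gamma$ must be overshadowed \emph{on average} by the mixture $p$, which is the assertion of Definition~\ref{def:JointlyClosed} evaluated at the actions that survive. I would then observe that the stronger per-action statement in the proposition (``$\gamma$ cannot achieve a strictly positive log-likelihood ratio against $p$ on \emph{any} $a\in\mathcal{A}^{o}(S)$'') is the clean special case, and that upgrading the frequency-averaged inequality to the per-action one uses that $p$, being a limit of the Thompson-Sampling posterior, is a fixed point of the induced action map --- exactly the content of the deferred stochastic-approximation analysis.

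For \textbf{internal cohesion}, suppose $G[S]$ is not strongly connected. Then the condensation of $G[S]$ has a terminal strongly connected component $C\subsetneq S$: no overshadowing edge leaves $C$ within $S$, and every $\theta\in S\setminus C$ has a directed overshadowing path into $C$ with no return path. I would show the posterior must eventually shed all mass on $S\setminus C$, contradicting $\mathrm{supp}(p)=S$. Processing $S\setminus C$ layer by layer from $C$ outward: if a vertex $\theta$ in the current outermost surviving layer retained positive mass, then by Thompson Sampling $\phi(\theta)$ would be played infinitely often, and along those steps the log-ratio $\log(\pi_t(\gamma)/\pi_t(\theta))$ --- for the neighbour $\gamma$ one layer inward with $\theta\to\gamma$, $\theta\in\Theta_{\phi(\theta)}(i)$, $\gamma\in\Theta_{\phi(\theta)}(j)$, $i\ge j$ --- accumulates nonnegative drift, while the comparison against the already-eliminated outer layers contributes nothing in the limit; combined with the $o(t)$ martingale term this drives $\pi_t(\theta)\to 0$, contradicting $\theta\in\mathrm{supp}(p)$. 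The one wrinkle is the weak (tie) case $i=j$, where Definition~\ref{def:overshadowEdge} yields zero first-order drift; I would dispatch it by first enlarging $C$ to absorb any vertex tied with $C$ at the relevant action --- a persistent exact tie makes the two parameters likelihood-indistinguishable there, so no mass leaks across --- and then using the strict gaps $\epsilon(\theta)$ supplied by Assumption~2 to ensure all remaining edges are strict.

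I expect the main obstacle to be the self-referential feedback between posterior and actions: because Thompson Sampling generates $A_t$ from the very posterior whose limit we are pinning down, the statements ``the empirical action frequencies converge to $\phi_{*}p$'' and ``$p$ is a fixed point of the action map'' are not free --- they are essentially the stochastic-approximation content deferred to the next subsection, and without them the martingale argument only yields the frequency-\emph{averaged} form of joint closure rather than the per-action form, and in Part~2 only controls the drift on the self-played action $\phi(\theta)$ rather than uniformly over all played actions. A fully rigorous proof would therefore have to couple the telescoping bound above with a characterization of the limit set of the Thompson-Sampling dynamics; the tie-handling in the overshadowing graph is a genuine but secondary nuisance.
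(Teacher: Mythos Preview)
Your proposal follows the same contradiction-via-likelihood-ratio logic as the paper's sketch, but supplies considerably more structure than the paper does. The paper's own argument is a two-paragraph heuristic: for joint closure it simply says that any outside $\gamma$ that could ``beat'' the members of $S$ on the infinitely-sampled actions would eventually claim positive mass; for internal cohesion it says one would ``prune out'' overshadowed parameters from $S$ until no nontrivial cycle remains. Your log-mixture identity $U_t^{\gamma}=\log\pi_t(\gamma)-\sum_{\theta\in S}p(\theta)\log\pi_t(\theta)$ (exploiting that $\sum p(\theta)=1$ kills the normalizer), the martingale/compensator split with $M_t=o(t)$ via an SLLN for martingale differences, and the terminal-SCC layer-stripping on the condensation of $G[S]$ are all your additions; the paper invokes none of these devices explicitly. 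You also correctly isolate the stochastic-approximation gap---that the realized action frequencies and the limiting posterior are mutually determined---which the paper likewise defers.

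One gap you take for granted but do not flag: you assert that on a sub-event $\Omega_S'$ the posterior restricted to $S$ admits a limit point $p$ with $\mathrm{supp}(p)=S$. Compactness of the simplex gives limit points, but \emph{full support} does not follow from $\sum_{\theta\in S}\pi_t(\theta)\to 1$ alone---the posterior could cycle among near-Dirac masses on different elements of $S$, and then every subsequential limit is supported on a strict subset. What is actually needed, and what both the proposition's statement and the paper's proof leave implicit, is \emph{minimality} of $S$: no proper $S'\subsetneq S$ itself satisfies pathwise concentration on a positive-probability sub-event of $\Omega_S$. Minimality yields $\limsup_t\pi_t(\theta)>0$ for every $\theta\in S$, which is enough to run your layer-stripping in Part~2 (each outer-layer $\theta$ is sampled infinitely often, so the overshadowing drift along $\phi(\theta)$ accumulates) and, in Part~1, lets you replace the fixed $p$ by the running time-average of $\pi_t$ restricted to $S$. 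The gap is repairable, but it should be stated explicitly alongside the feedback issue you already name.
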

    
    \begin{proof}[Sketch of Proof]
    \textbf{(1) Jointly Closed.} 
    Since $S$ collectively captures total posterior mass $1$ on $\Omega_S$, 
    any parameter $\gamma\notin S$ that could ``beat'' all members of $S$ on the actions sampled infinitely often would eventually claim some positive fraction of posterior.\footnote{Note we do not require the outsider parameter $\gamma$ to overshadow any parameter in $S$. } Concretely, if repeated data on the event $\Omega_S$ were to show a sustained likelihood‐ratio growth favoring $\gamma$ relative to \emph{every} $\theta\in S$, then $\gamma$ would not remain at posterior mass near $0$. Hence $\sum_{\theta\in S}\pi_t(\theta)\to 1$ would be contradicted. Therefore, no external parameter $\gamma$ can systematically overtake 
    the collective mixture of parameters in $S$.\footnote{
    Even if $S$ is jointly closed, the posterior need not settle on a single 
    stable distribution within $S$; it may oscillate or cycle among the parameters 
    in $S$. One can thus generalize from a unique stable posterior distribution 
    to a broader stable region of distributions. Small changes in 
    overshadowing relationships or model parameters can then merge or split such 
    regions, producing a qualitative shift in the limiting behavior, akin to a 
    ``phase transition.''}

    \smallskip

    \textbf{(2) Internal Cohesion.}
    If a parameter $\theta\in S$ were definitively overshadowed 
    by another $\gamma\in S$ on infinitely many samples of $\phi(\theta)$, 
    $\theta$ would eventually lose its posterior mass for good. 
    One would prune out parameters from $S$ 
    until no nontrivial cycle remains. Then $S$ could not retain all the mass. 
    Hence every $\theta\in S$ must remain viable, implying $S$ is strongly connected 
    (i.e., no irreversible overshadowing within $S$).
    
    \smallskip
    
    Combining these points, $S$ must be robust against external parameters 
    and be internally ``cohesive'' while 
    its favored actions are indeed sampled infinitely often. 
    Thus the posterior can concentrate on $S$ pathwise with positive probability.
    \end{proof}

\begin{remark}[Relation to Expected Posterior]
There may be multiple disjoint subsets 
$S_1, \dots, S_k \subseteq\Theta$ each satisfying 
Definition~\ref{def:PathwiseCollectiveConcentration} on disjoint events 
with positive probability. 
Thus, in expectation, the limiting posterior could be a mixture over these sets. 
One might then write:
$$
   \lim_{t\to\infty} \mathbb{E}\Bigl[\sum_{\theta \in S_1}\!\pi_t(\theta)\Bigr] 
   \;=\; p_1, 
   \quad
   \lim_{t\to\infty} \mathbb{E}\Bigl[\sum_{\theta \in S_2}\!\pi_t(\theta)\Bigr] 
   \;=\; p_2,
   \quad
   \dots
   \quad
   \text{with }p_1 + \cdots + p_k = 1.
$$
In such a case, no single $S_i$ necessarily collects mass $1$ on average. 
Instead, each $S_i$ captures the entire posterior pathwise
on a subset of runs with probability $p_i$, 
giving a mixed limit for the expected posterior.
\end{remark}

\section{Conclusion and Future Work}

In this paper, we examined the behavior of dynamic decision-making algorithms under model misspecification in a finite-parameter setting. Our main contributions include: First, we introduced an extension of White's (1982) pseudo-true parameter concept for the dynamic case. Instead of a single pseudo-true point, we obtained a \emph{pseudo-truth set} \(\Theta^\dagger \subseteq \Theta\) whose elements provide equally good fits in terms of KL divergence. Second, we established that Thompson Sampling allocates posterior probability to parameters in \(\Theta^\dagger\) at an exponential rate. Moreover, the total regret grows linearly in time \(T\), implying a constant per-period regret in the limit. Third, within \(\Theta^\dagger\), multiple parameters may remain positive expeceted posterior forever, and the posterior can exhibit cyclic behavior among them. We derived necessary conditions under which pathwise posterior concentration on a subset of \(\Theta^\dagger\) can occur. Specifically, a subset \(S\subset \Theta^\dagger\) must be jointly closed under overshadowing and strongly connected to sustain nontrivial posterior mass in the long run. Finaly, our findings suggest that while exact parameter inference can fail under misspecification, but the simulation results shows Thompson Sampling can still yield robust action-selection performance. This suggests there may be structural conditions under which a Bayesian updating scheme remains near-optimal even when the assumed model is somewhat misspecified.

\newpage
\printbibliography

\newpage
\section{Appendix}

\subsection{Proof of Proposition 1}
\begin{proof}

    This proof is an one-step generalization of the approach by \textcite{kim2017thompson}. We can firstly re-write the posterior distribution of the pseudo-truth parameter set $\Theta^{\dagger} \subset \Theta$ as:

    \begin{equation}
        \begin{aligned}
            \pi_t(\Theta^{\dagger}) & =\frac{\sum_{\theta \in \Theta^\dagger} \mathcal{L}_\theta\left(H_t\right) \pi_0(\theta)}{\sum_{\gamma \in \mathcal{P}} \mathcal{L}_\gamma\left(H_t\right) \pi_0(\gamma)} \\
            & = \frac{ \mathcal{L}_{\Theta^\dagger}\left(H_t\right) \pi_0(\Theta^\dagger)}{\sum_{\gamma \in \mathcal{P}} \mathcal{L}_\gamma\left(H_t\right) \pi_0(\gamma)} \\
            & = \frac{1}{1+\sum_{\gamma \notin \Theta^\dagger} c_\gamma\left(\frac{\mathcal{L}_\gamma\left(H_t\right)}{\mathcal{L}_{\Theta^\dagger}\left(H_t\right)}\right)} \\
            & = \frac{1}{1+\sum_{\gamma \notin \Theta^\dagger} c_\gamma \exp \left(-\sum_{s=0}^t \log \Lambda_s^\gamma\right)}
        \end{aligned}
    \end{equation}

    where the constant $c_\gamma = \frac{\pi_0(\gamma)}{\pi_0(\Theta^\dagger)}$ and $\Lambda_s^\gamma = \frac{\mathcal{L}_{\Theta^\dagger}\left(H_s\right)}{\mathcal{L}_\gamma\left(H_s\right)}$. Here $\pi_0(\Theta^\dagger)$ is the prior probability of the pseudo-truth set $\Theta^\dagger$, which is defined as $\pi_0(\Theta^\dagger) = \sum_{\theta \in \Theta^\dagger} \pi_0(\theta)$ in the finite parameter space setting. And the $\mathcal{L}_{\Theta^\dagger}\left(H_t\right)$ is the likelihood of the pseudo-truth set $\Theta^\dagger$ given the history $H_t$, which is defined as $\mathcal{L}_{\Theta^\dagger}\left(H_t\right) = \sum_{\theta \in \Theta^\dagger} \mathcal{L}_\theta\left(H_t\right)\tilde{\pi}_0(\theta)$ with normalized prior $\tilde{\pi}_0(\theta) = \frac{\pi_0(\theta)}{\pi_0(\Theta^\dagger)}$.

    Then we can define the following stochastic process:

    \begin{equation}
        Z_t = \sum_{s=0}^t \log \Lambda_s^\gamma
    \end{equation}

    By Doob Decomposition, we can write $Z_t$ as:

    \begin{equation}
        \begin{aligned}
            Z^\gamma_t & = M^\gamma_t + I^\gamma_t \\
            & = \left(\sum_{s=0}^t \log \Lambda_s^\gamma - \sum_{s=0}^t \mathbb{E}_{\theta^*}\left[\log \Lambda_s^\gamma \mid \mathcal{H}_{s-1}\right] \right) + \sum_{s=0}^t \mathbb{E}_{\theta^*}\left[\log \Lambda_s^\gamma \mid \mathcal{H}_{s-1}\right] 
        \end{aligned}
    \end{equation}

    This is a sub-martingale respect to the filtration $\mathcal{H}_t = \sigma(H_t)$. Note that the conditional expectation is taken with respect to the true parameter $\theta^*$ instead of $\theta$.

    \begin{lemma}\label{lemma:finiteIncrements}
        The martingale $M_t^\gamma$ has finite increments. That is there exist a constant $d > 0$ such that:
        \begin{equation}
            | \log \Lambda_s^\gamma - \mathbb{E}_{\theta^*}\left[\log \Lambda_s^\gamma \mid \mathcal{H}_{s-1}\right] |\leq d
        \end{equation}
    \end{lemma}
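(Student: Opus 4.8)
The plan is to express the martingale increment as a single‑period log‑likelihood ratio, observe that this ratio is a Lipschitz function of the scalar reward $R_s$, and then bound its fluctuation around the conditional mean using either boundedness or sub‑Gaussianity of the noise. First I would unfold $\log\Lambda_s^\gamma$. Since $\mathcal L_\gamma(H_s)/\mathcal L_\gamma(H_{s-1})=f_\gamma(R_s\mid A_s,X_s)$ and $\mathcal L_{\Theta^\dagger}(H_s)/\mathcal L_{\Theta^\dagger}(H_{s-1})=\sum_{\theta\in\Theta^\dagger}w_\theta(s-1)\,f_\theta(R_s\mid A_s,X_s)$ for weights $w_\theta(s-1)$ that are nonnegative, sum to one, and are $\mathcal H_{s-1}$‑measurable, we get
$$
\log\Lambda_s^\gamma \;=\; h_s(R_s), \qquad h_s(r)\;:=\;\log\!\Bigl(\sum_{\theta\in\Theta^\dagger}w_\theta(s-1)\,\tfrac{f_\theta(r\mid A_s,X_s)}{f_\gamma(r\mid A_s,X_s)}\Bigr).
$$
In the Gaussian model $\partial_r\log\bigl(f_\theta(r\mid a,x)/f_\gamma(r\mid a,x)\bigr)=\sigma^{-2}\bigl(r_\theta(a,x)-r_\gamma(a,x)\bigr)$, so $h_s'(r)$ is a convex combination of the quantities $\sigma^{-2}\bigl(r_\theta(A_s,X_s)-r_\gamma(A_s,X_s)\bigr)$. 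Hence $h_s$ is Lipschitz on $\mathbb R$ with constant $D/\sigma^2$, where $D:=\max_{\theta,\gamma\in\Theta}\sup|r_\theta-r_\gamma|$ is finite because $\Theta$ is finite and, under the maintained regularity of the model, the played actions and covariates lie in a bounded set.

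Next I would center. Conditionally on $\mathcal H_{s-1}$ the pair $(A_s,X_s)$ is fixed and $R_s=r_{\theta^*}(A_s,X_s)+\epsilon_s$; the increment of the Doob decomposition is $h_s(R_s)-\mathbb E_{\theta^*}[h_s(R_s)\mid\mathcal H_{s-1}]$, a centered $D/\sigma^2$‑Lipschitz function of $R_s$. If the maintained model assumes bounded noise, $|\epsilon_s|\le C$ a.s., then $R_s$ lies in an interval of length $2C$, so $h_s(R_s)$ varies by at most $2DC/\sigma^2$ and the lemma holds with $d=2DC/\sigma^2$. If instead one keeps genuinely Gaussian noise, $R_s$ is conditionally $\mathcal N(r_{\theta^*}(A_s,X_s),\sigma^2)$ and the centered Lipschitz functional is conditionally sub‑Gaussian with parameter at most $D/\sigma$; the \emph{finite increments} conclusion then holds in the sub‑Gaussian sense, $\mathbb E_{\theta^*}[\exp(\lambda(h_s(R_s)-\mathbb E_{\theta^*}[h_s(R_s)\mid\mathcal H_{s-1}]))\mid\mathcal H_{s-1}]\le\exp(\lambda^2 D^2/(2\sigma^2))$, which is exactly what the subsequent Azuma–Hoeffding‑type step in the proof of Proposition~1 uses.

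The main obstacle is precisely this last dichotomy: the lemma as written (a deterministic pointwise bound $d$) is true only under a bounded‑noise assumption, and with truly Gaussian $\epsilon_s$ it must be replaced by the sub‑Gaussian analogue above. I would therefore either (a) add the explicit hypothesis of bounded rewards and set $d=2DC/\sigma^2$, or (b) restate Lemma~\ref{lemma:finiteIncrements} in sub‑Gaussian form and carry the constant $D/\sigma$ through the rest of the argument; nothing downstream changes materially. A secondary and entirely routine point is to check that the mixture over the finite set $\Theta^\dagger$ causes no degeneracies: the weights $w_\theta(s-1)$ are well defined because $\tilde\pi_0$ puts strictly positive mass on each element of $\Theta^\dagger$, and the convex‑combination structure is exactly what makes $h_s$ Lipschitz with the same constant as any single‑parameter log‑ratio.
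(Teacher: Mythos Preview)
Your proposal is correct and in fact considerably more careful than the paper's own argument. The paper's proof is a single sentence: by compactness of the action space $\mathcal{A}$, the density $f_\gamma(r\mid A_t)$ attains a maximum and a minimum, hence $\log\Lambda_s^\gamma$ is bounded. That is really only an argument that the conditional means $r_\theta(a,x)$ are bounded as $a$ ranges over a compact set; it says nothing about the behaviour of $\log\Lambda_s^\gamma$ as a function of the reward $R_s$, which under genuinely Gaussian noise is an unbounded (linear) function of $R_s$ and therefore admits no deterministic pointwise bound. Your Lipschitz-in-$r$ decomposition makes this dependence explicit, and your dichotomy---a literal bound $d$ under a bounded-noise hypothesis, versus conditionally sub-Gaussian increments with parameter $D/\sigma$ under Gaussian noise---is exactly the right way to salvage the statement so that the downstream Azuma--Hoeffding step still goes through. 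In short, you have not merely reproduced the paper's proof but diagnosed and repaired a gap in it; the paper's compactness remark supplies the finiteness of $D=\max_{\theta,\gamma}\sup_{a,x}|r_\theta(a,x)-r_\gamma(a,x)|$ that your argument uses, but not the boundedness of the increment itself.
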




    \begin{proof}
        By the assumption that our action space $\mathcal{A}$ is compact, so the function $f_{\gamma}(r \mid A_t)$ can get maximun and minimun value. This ensures $\log \Lambda_s^\gamma$ is always positive and finite.
    \end{proof}

    \begin{lemma}\label{lemma:boundedIncrements}
    The second term $I_t^\gamma$ is a predictable process. That is there exist a constant $c > 0$ such that:

    \begin{equation}
        \mathbb{E}_{\theta^*}\left[\log \Lambda_s^\gamma \mid \mathcal{H}_{s-1}\right] \leq c
    \end{equation}
    \end{lemma}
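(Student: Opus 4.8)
The plan is to bound the one‐step predictable increment $\mathbb{E}_{\theta^*}[\log\Lambda_s^\gamma\mid\mathcal{H}_{s-1}]$ \emph{uniformly} — over the time index $s$, over every history $\mathcal{H}_{s-1}$ (hence over whatever action Thompson Sampling ends up selecting), and over the finitely many $\gamma\notin\Theta^\dagger$ — and then read off $c$ as the resulting maximum. First I would make the per‐step ratio explicit. Writing the $s$‐th increment as $\log\Lambda_s^\gamma=\log m_{s-1}^{\Theta^\dagger}(R_s\mid A_s,X_s)-\log f_\gamma(R_s\mid A_s,X_s)$, where $m_{s-1}^{\Theta^\dagger}(\cdot\mid a,x)=\sum_{\theta\in\Theta^\dagger}\tilde\pi_{s-1}(\theta\mid\Theta^\dagger)\,f_\theta(\cdot\mid a,x)$ is the posterior–predictive density restricted to $\Theta^\dagger$ (so that $\mathcal{L}_{\Theta^\dagger}(H_s)/\mathcal{L}_{\Theta^\dagger}(H_{s-1})=m_{s-1}^{\Theta^\dagger}(R_s\mid A_s,X_s)$), the key structural fact is that $m_{s-1}^{\Theta^\dagger}(\cdot\mid a,x)$ is itself a probability density in the reward variable. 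Since conditional on $\mathcal{H}_{s-1}$ the pair $(A_s,X_s)$ is generated by the exogenous covariate draw and the sampling randomization, it suffices to bound $\mathbb{E}_{\theta^*}[\log\Lambda_s^\gamma\mid A_s=a,X_s=x]$ uniformly in $(a,x)$ and then average over the conditional law of $(A_s,X_s)$.

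For the inner expectation I would use the split $\mathbb{E}_{\theta^*}[\log\Lambda_s^\gamma\mid a,x]=\mathbb{E}_{\theta^*}\!\bigl[\log\tfrac{f^*(R_s\mid a,x)}{f_\gamma(R_s\mid a,x)}\bigr]+\mathbb{E}_{\theta^*}\!\bigl[\log\tfrac{m_{s-1}^{\Theta^\dagger}(R_s\mid a,x)}{f^*(R_s\mid a,x)}\bigr]$. The first term is exactly $\mathcal{K}(v^*_{a,x}\mid v_\gamma^{a,x})$. The second is non‐positive by Jensen's inequality for the concave map $\log$: it is at most $\log\mathbb{E}_{\theta^*}\!\bigl[m_{s-1}^{\Theta^\dagger}(R_s\mid a,x)/f^*(R_s\mid a,x)\bigr]=\log\!\int m_{s-1}^{\Theta^\dagger}(r\mid a,x)\,dr=0$, using that the true reward law dominates $m_{s-1}^{\Theta^\dagger}$ (automatic in the Gaussian‐noise reward models of the numerical examples, where every density has full support on $\mathbb{R}$). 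Hence $\mathbb{E}_{\theta^*}[\log\Lambda_s^\gamma\mid a,x]\le\mathcal{K}(v^*_{a,x}\mid v_\gamma^{a,x})$, and it remains only to bound this divergence uniformly.

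In the Gaussian specification with common variance $\sigma^2$ this divergence equals $\tfrac{1}{2\sigma^2}\bigl(r^*(a,x)-r_\gamma(a,x)\bigr)^2$; because $\mathcal{A}$ is compact, the mean‐reward functions are continuous on $\mathcal{A}\times\mathcal{X}$ (the covariate entering only through a bounded set, trivially so in the examples where there is no covariate), and $\Theta\setminus\Theta^\dagger$ is finite, the quantity $c:=\max_{\gamma\notin\Theta^\dagger}\sup_{a\in\mathcal{A},\,x\in\mathcal{X}}\mathcal{K}(v^*_{a,x}\mid v_\gamma^{a,x})$ is finite. Averaging the uniform bound $\mathbb{E}_{\theta^*}[\log\Lambda_s^\gamma\mid a,x]\le c$ over the conditional distribution of $(A_s,X_s)$ given $\mathcal{H}_{s-1}$ yields $\mathbb{E}_{\theta^*}[\log\Lambda_s^\gamma\mid\mathcal{H}_{s-1}]\le c$, which is the claim, and this holds whatever action the algorithm plays. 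For a non‐Gaussian true law one keeps the Jensen step and invokes instead the structural hypothesis that $\sup_{a,x}\mathcal{K}(v^*_{a,x}\mid v_\gamma^{a,x})<\infty$ — a uniform dominated‐likelihood condition complementary to the one‐sided lower bound supplied by Assumption~2.

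I expect the only genuine subtlety to be the treatment of the mixture $\mathcal{L}_{\Theta^\dagger}$: one cannot bound $\log m_{s-1}^{\Theta^\dagger}$ by substituting a single $\theta\in\Theta^\dagger$ (that inequality runs the wrong way), so the idea is to exploit that the posterior–predictive is a bona fide density and let Jensen collapse the entire $\Theta^\dagger$‐block into the harmless ``$\le 0$'' term. Everything else is the compactness and continuity bookkeeping needed to make the per‐action KL bound uniform over histories and time, together with the mild absolute‐continuity requirement that makes every log‐ratio well defined — both already implicit in the Gaussian running example.
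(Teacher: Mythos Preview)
Your argument is internally sound, but you have proved a different inequality from the one the paper's own proof establishes. Despite the ``$\le c$'' in the displayed statement, the paper's proof of this lemma in fact derives the \emph{lower} bound
\[
\mathbb{E}_{\theta^*}^\tau\bigl[\log\Lambda_s^\gamma \mid \mathcal{H}_{s-1}\bigr]\;\ge\;\epsilon\;>\;0,
\]
by conditioning on the action, rewriting the inner expectation as a KL‐type quantity, taking the infimum over $(x,a)$, and then invoking Assumption~2 (the pseudo‐truth gap). That lower bound is precisely what feeds the subsequent step $I_t^\gamma\ge\epsilon t$ in the proof of Theorem~\ref{theorem:pseudo_truth_set}, driving the exponential concentration of $\pi_t(\Theta^\dagger)$. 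The inequality sign in the lemma statement is evidently a typo.

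Your Jensen‐based upper bound --- splitting against the true density $f^*$, collapsing the $\Theta^\dagger$‐mixture via $\mathbb{E}_{\theta^*}[\log(m/f^*)]\le 0$, and then bounding $\mathcal{K}(v^*\mid v_\gamma)$ uniformly by compactness --- is correct for the literal statement and is a clean way to get finiteness of the predictable increment from above. But it plays no role in the paper's argument: an upper bound on $\mathbb{E}_{\theta^*}[\log\Lambda_s^\gamma\mid\mathcal{H}_{s-1}]$ does not help show that $Z_t^\gamma\to+\infty$, which is what forces $\pi_t(\gamma)\to 0$ for $\gamma\notin\Theta^\dagger$. So the approaches are not merely different routes to the same destination; they head in opposite directions. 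If you want to match the paper, you should instead show the uniform positive lower bound, and that is where Assumption~2 --- which you never invoke --- does all the work.
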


    \begin{proof}
        
    \begin{equation}
        \begin{aligned}
            & \mathbb{E}_{\theta^*}^\tau\left[\log \Lambda_s^\gamma \mid \mathcal{H}_{s-1}\right] \\
            = & \mathbb{E}_{\theta^*}^\tau\left[\mathbb{E}_{\theta^*}^\tau\left[\log \Lambda_s^\gamma \mid \mathcal{H}_{s-1}, A_{s-1}\right] \mid \mathcal{H}_{s-1}\right] \\
            = & \mathbb{E}_{\theta^*}^\tau\left[\mathcal{K}\left(v_{\Theta^\dagger}\left(\cdot \mid X_{s-1}, A_{s-1}\right) \mid v_\gamma\left(\cdot \mid X_{s-1}, A_{s-1}\right)\right) \mid \mathcal{H}_{s-1}\right] \\
            \geq & \mathbb{E}_{\theta^*}^\tau\left[\min _{x \in \mathcal{X}, a \in \mathcal{A}} \mathcal{K}\left(v_{\Theta^\dagger}^{x, a} \mid v_\gamma^{x, a}\right) \mid \mathcal{H}_{s-1}\right] \\
            \geq & \mathbb{E}_{\theta^*}^\tau\left[\max _{\theta \in \Theta^\dagger} \inf _{x \in \mathcal{X}, a \in \mathcal{A}} \mathcal{K}\left(v_{\theta}^{x, a} \mid v_\gamma^{x, a}\right) \mid \mathcal{H}_{s-1}\right] \\
            \geq & \min_{\theta \neq \gamma \in \Theta} \max _{\theta \in \Theta^\dagger} \inf_{x \in \mathcal{X}, a \in \mathcal{A}} \epsilon(x, a, \theta, \gamma) \\
            = & \epsilon>0
        \end{aligned}
        \end{equation}
        
        The last inequality comes from the fact of assumption 2. 
    \end{proof}

    Hence we can write the expectation of the posterior of the pseudo-truth set as:

    \begin{equation}
        \begin{aligned}
        \mathbb{E}_{\theta^*}^\tau \left[ \pi_t(\Theta^{\dagger}) \right] & = 
        \mathbb{E}_{\theta^*}^\tau \left[ \frac{1}{1 - \sum_{\gamma \notin \Theta^\dagger} c_\gamma (M_t^\gamma + I_t^\gamma)} \right] \\
            & \geq \mathbb{E}_{\theta^*}^\tau \left[ \frac{1}{1 - \sum_{\gamma \notin \Theta^\dagger} c_\gamma (M_t^\gamma + \epsilon t)} \right] 
        \end{aligned}
    \end{equation}

    Then following the same methods by \textcite{kim2017thompson}, for any $\delta>0$ and $\gamma \in \mathcal{P}$ we can define the event:

    $$
    B_t^\gamma(\delta)=\left\{\left|M_t^\gamma\right| \leq \delta t\right\}
    $$

    then, for any choice of $0<\delta<\epsilon$

    $$
    \begin{aligned}
    & \mathbb{E}_{\theta^*}^\tau\left[\pi_t(\Theta^{\dagger})\right] \\
    \geq & \mathbb{E}_{\theta^*}^\tau\left[\frac{1}{1+\sum_{\gamma \notin \Theta^\dagger} c_\gamma \exp \left(-M_t^\gamma-\epsilon t\right)} \mathbbm{1} \cap_{\gamma \notin \Theta^\dagger} B_t^\gamma(\delta)\right] \\
    & +\mathbb{E}_{\theta^*}^\tau\left[\frac{1}{1+\sum_{\gamma \notin \Theta^\dagger} c_\gamma \exp \left(-M_t^\gamma-\epsilon t\right)} \mathbbm{1}\left(\cap_{\gamma \notin \Theta^\dagger} B_t^\gamma(\delta)\right)^c\right] \\
    \geq & \frac{\mathbb{P}_{\theta^*}^\tau\left(\cap_{\gamma \notin \Theta^\dagger} B_t^\gamma(\delta)\right)}{1+\frac{1-\pi_0(\Theta^{\dagger})}{\pi_0(\Theta^{\dagger})} \exp (-(\epsilon-\delta) t)} \\
    = & \frac{1-\mathbb{P}_{\theta^*}^\tau\left(\cup_{\gamma \notin \Theta^\dagger} B_t^\gamma(\delta)^c\right)}{1+\frac{1-\pi_0(\Theta^{\dagger})}{\pi_0(\Theta^{\dagger})} \exp (-(\epsilon-\delta) t)} .
    \end{aligned}
    $$

    The second inequality is because inside the second expectation, the term is always positive and greater than zero. Then by the union bound we have:

    $$
    \begin{aligned}
    \mathbb{E}_{\theta^*}^\tau\left[\pi_t(\Theta^\dagger)\right] & \geq \frac{1-\sum_{\gamma \notin \Theta^\dagger} \mathbb{P}_{\theta^*}^\tau\left(B_t^\gamma(\delta)^c\right)}{1+\frac{1-\pi_0(\Theta^{\dagger})}{\pi_0(\Theta^{\dagger})} \exp (-(\epsilon-\delta) t)} \\
    & \geq \frac{1-\sum_{\gamma \notin \Theta^\dagger} \mathbb{P}_{\theta^*}^\tau\left(\left|M_t^\gamma\right| \geq \delta t\right)}{1+\frac{1-\pi_0(\Theta^{\dagger})}{\pi_0(\Theta^{\dagger})} \exp (-(\epsilon-\delta) t)}
    \end{aligned}
    $$

By the Lemma \ref{lemma:boundedIncrements}, we can have:

\begin{equation}
    \mathbb{P}_{\theta^*}^\tau\left(\left|M_t^\gamma\right| \geq \delta t\right) \leq 2 \exp \left(-\frac{\delta^2 t^2}{2 d^2}\right)
\end{equation}

The inequality is given by the Azuma-Hoeffding inequality (\cite{chung2006concentration}). Then we can have:

\begin{equation}
    \mathbb{E}_{\theta^*}^\tau\left[\pi_t(\Theta^\dagger)\right] \geq \frac{1-2(|\mathcal{P}|-|\Theta^\dagger|) \exp \left(-\frac{\delta^2 t}{2 d^2}\right)}{1+\frac{1-\pi_0(\Theta^\dagger)}{\pi_0(\Theta^\dagger)} \exp (-(\epsilon-\delta) t^2)}
\end{equation}

Then we can get:

\begin{equation}\label{eq:lower_bound}
    \begin{aligned}
    & \mathbb{E}_{\theta^*}^\tau\left[1-\pi_t(\Theta^\dagger)\right] \\
    = & \frac{\frac{1-\pi_0(\Theta^\dagger)}{\pi_0(\Theta^\dagger)} \exp (-(\epsilon-\delta) t)+2(|\mathcal{P}|-|\Theta^\dagger|) \exp \left(-\frac{\delta^2 t^2}{2 d^2}\right)}{1+\frac{1-\pi_0(\Theta^\dagger)}{\pi_0(\Theta^\dagger)} \exp (-(\epsilon-\delta) t)} \\
    \leq & \frac{1-\pi_0(\Theta^\dagger)}{\pi_0(\Theta^\dagger)} \exp (-(\epsilon-\delta) t)+2(|\mathcal{P}|-|\Theta^\dagger|) \exp \left(-\frac{\delta^2 t^2}{2 d^2}\right) \\
    = & \frac{1-\pi_0(\Theta^\dagger)}{\pi_0(\Theta^\dagger)} \exp \left(-\frac{\epsilon t}{2}\right)+2(|\mathcal{P}|-|\Theta^\dagger|) \exp \left(-\frac{\epsilon^2 t^2}{8 d^2}\right) \\   \leq & \frac{1-\pi_0(\Theta^\dagger)}{\pi_0(\Theta^\dagger)} \exp \left(-\frac{\epsilon t}{2}\right)+2(|\mathcal{P}|-|\Theta^\dagger|) \exp \left(-\frac{\epsilon^2 t}{8 d^2}\right) \\
    \leq & a_\theta \exp \left(-b_\theta t\right)
    \end{aligned}
\end{equation}

One possible choice can be $a_{\Theta^\dagger} =
2 \max \left\{ \frac{1 - \pi_0(\Theta^{\dagger})}{\pi_0(\Theta^{\dagger})}, 2(|\mathcal{P}| - |\Theta^\dagger|) \right\} $
and $b_{\Theta^\dagger}  = \min \left\{ \frac{\epsilon}{2}, \frac{\epsilon^2}{8d^2} \right\}$. \footnote{From the proof we know that for sufficiently large $t$, first term in the 
\eqref{eq:lower_bound} will dominate the second term. Hence the rate will be approximately agree with the constants as $a_{\Theta^\dagger} = \frac{1 - \pi_0(\Theta^{\dagger})}{\pi_0(\Theta^{\dagger})}$ and  $b_{\Theta^\dagger}  = \frac{\epsilon}{2}$.}

\end{proof}

\end{document}